\definecolor{DarkGreen}{rgb}{0.2,0.6,0.2}
\definecolor{purple}{rgb}{0.6,0.3,0.8}
\def\d{\mathrm{d}}
\newcommand{\R}{\mathbb{R}}
\newcommand{\N}{\mathbb{N}}
\newcommand{\M}{\mathcal{M}}
\newcommand{\id}{\mathds{1}}
\renewcommand{\ge}{\geqslant}
\renewcommand{\geq}{\geqslant}
\renewcommand{\leq}{\leqslant}
\renewcommand{\epsilon}{\varepsilon}
\theoremstyle{plain}
\newtheorem{theorem}{Theorem}
\newtheorem{proposition}{Proposition}
\theoremstyle{definition}
\newtheorem{definition}{Definition}
\newtheorem{example}{Example}
\theoremstyle{remark}
\newtheorem{remark}{Remark}
\newcommand{\thedate}{\today}
\begin{document}
\title{A duality between utility transforms and probability distortions}

\author{Christopher P. Chambers\thanks{Department of Economics, Georgetown University, USA.  \Letter~\href{mailto:cc1950@georgetown.edu}{cc1950@georgetown.edu}.} \and Peng Liu\thanks{School of Mathematics, Statistics and Actuarial Science, University of Essex, UK.  \Letter~\href{mailto:peng.liu@essex.ac.uk}{peng.liu@essex.ac.uk}.} \and
  Ruodu Wang\thanks%
  {Department of Statistics and Actuarial Science,
  University of Waterloo, Canada.
 \Letter~\href{mailto:wang@uwaterloo.ca}{wang@uwaterloo.ca}.  RW acknowledges financial support from the Natural Sciences and Engineering Research Council of Canada (RGPIN-2018-03823 and CRC-2022-00141.}  }

  \date{\thedate}

 \maketitle
\begin{abstract}
 In this paper, we establish a mathematical duality between utility transforms and
 probability distortions.
These transforms play a central role in decision under risk by forming the foundation for the classic theories of expected utility, dual utility, and  rank-dependent utility.
Our main results establish that probability distortions are characterized by commutation with utility transforms, and utility transforms are characterized by commutation with probability distortions. These results require no additional conditions,
and hence each class can be axiomatized with only one property.
Moreover, under monotonicity, rank-dependent utility transforms can be characterized by set commutation with either utility transforms or probability distortions.

\begin{bfseries}Key-words\end{bfseries}: Distributional transforms, probability distortions, utility transforms, rank-dependent-utility transforms; quantiles
\end{abstract}

 \section{Introduction}

 Distributional transforms are mappings from one set of probability distributions to another set of distributions.  They are widely used in economics, finance, and risk analysis. A classical example of a distributional transformation is the Lorenz curve (\citet{L05} and \citet{Gas71}).  Formally, we can think of a mapping carrying a distribution of wealth, represented by a cumulative distribution function (cdf), to another cdf over percentages.  The Lorenz curve evaluated at a percentage $p$ specifies the proportion of wealth held by $p$ poorest individuals.  The Lorenz curve has all of the properties of a cdf, and hence can be viewed as a cdf itself, rendering the Lorenz map a distributional transform.   % a simple example, the Lorenz curve (\cite{L05} and \cite{Gas71}) in economics, is a distributional transform.
 The recent study \cite{LSW21} contains a  general treatment of distributional transforms with many other examples.
 Two special classes of  distributional transforms, utility transforms and probability distortions, play a central role in decision theory.  Informally, the former reflects the induced distribution of utils that a given distribution of wealth induces, whereas the latter is a classical representation device used in much of behavioral economics.

% The expected value of the utility of a random final wealth is called the expected utility, the dominant model in decision theory axiomatized by \cite{vNM47}.  It is well understood by every colleague student in economics that a key step in making decision over risks is to transform the value of the wealth level to a utility value, before aggregating them via an expectation.

A very simple case of a decision model based on a distributional transform is the ``dual theory'' of \citet{Y87}.
The individual is supposed to base decisions over a pair of distributions on the expected values of the transformed distributions, rather than the distributions themselves.
In risk management, this mapping is known as a distortion risk measure.  Important special cases include both the Value-at-Risk and the Expected Shortfall, the two regulatory risk measures in banking and insurance (see, e.g., \cite{FS16} and \cite{MFE15}).  In the same context, the distorted distribution can also be used to model tail risk (\cite{LW21}).   \cite{LSW21} axiomatized  probability distortions using three conditions, which we summarize in Section \ref{Sec:PD}.

The rank-dependent utility (RDU) model (\cite{Qui82, Qui93} and \cite{S89}) is one of the most common alternatives to the classical expected utility theory. It effectively generalizes Yaari's dual theory by allowing both attitudes toward risk in the form of a utility index, as well as a distributional transform in the form of what is termed a ``probability distortion.''   Probability distortions are distributional transforms that work via transforming the cumulative probability of receiving some outcome according to some prespecified nondecreasing function.  In this sense, the RDU model can be viewed as a composition of a utility transform and a probability distortion. % and it incorporates the psychological biases of individuals in decision making. 
  Another related approach is \cite{TK92}'s cumulative prospect theory, which can be viewed as a combination of RDU for gains and losses, respectively.

A somewhat more recent decision theory that might also be understood as being based on distributional transforms is the theory of \citet{BB12}, whereby an individual facing probabilistic risk is supposed to maximize an expected utility of some probability less a cost of ``choosing'' that probability.  The cost is typically given as the relative entropy of the chosen probability to the true probability.  This chosen probability can be viewed as a distributional transform.\footnote{Obviously the choice may in general be multi-valued, but any selection would suffice.}  As such, this model is, in a sense, an objective version of the multiplier preferences first axiomatized in economics by \cite{Str11}, themselves a special case of the variational preferences of \cite{MMR06}.

%Instead of considering the expected values of transformed distributions, we isolate the distributional transform from behavioral representations.  We view it as worthy of independent study.  we focus on the distributional transforms themselves involved in the expected utility theory, Yaari's dual utility theory and the RDU theory. The main contribution of this paper is to characterize probability distortions, utility transforms, and RDU transforms as special classes of distributional transforms.

Instead of considering distributional transforms model-by-model, in this work, we isolate the distributional transform from the underlying representations of interest.  Given its ubiquity, we view the distributional transform as worthy of independent study.  Our main goal here is to obtain a complete picture of which types of distributional transforms ``commute'' with respect to other classes of distributional transforms. 
This commutativity is interesting in a few different senses, as we will see later from our results.  First, it has a concrete interpretation as invariance under a certain form of (non-linear) rescaling. Second,
it helps to identify or characterize important classes of decision models.  Third, it allows for convenient operations in applications on these popular distributional transforms. Fourth, and perhaps being the most elegant point of this paper, it offers a new duality between the expected utility theory and the dual theory.

Our first result seeks to understand for which distributional transforms is it the case that an ordinal rescaling of the input distribution results in the same ordinal rescaling of the output distribution.

For example, imagine that an outside observer (an economist) writes down a behavioral model involving distributional transforms, but wants to remain flexible about the timing of their application.  For example, suppose  it is known that a nonlinear income tax is present, but to the economist, only the distribution of pre-tax income is observable.  It may be that the individual facing the risk possesses certain idiosyncratic tax credits or dues, unobservable to the economist.\footnote{It seems reasonable to assume that after-tax income is increasing, perhaps not strictly, with respect to pre-tax income.}  Thus, the individual faces a ``true'' distribution of ex-post income, which is unknown to the economist.  In this case, rather than postulating a distribution over distributions, in the interest of parsimony, it would seem reasonable that the economist apply a distributional transform that results in the ``correct'' transformed distribution of ex-post incomes independently of what the tax rates are.\footnote{This is not to say that the ultimate decision made by the decision maker is independent of the distributional transform---far from it.  Rather, it reflects a model in which the economist understands exactly how probabilities are being transformed, independently of the underyling tax code.}

In this example, the function carrying pre-tax income to post-tax income defines a distributional transform:  a distribution over pre-tax income naturally induces a distribution over post-tax income.  But we can think more generally of nondecreasing transformations.  The transform may represent  ``utils,'' which again may be unobservable to the economist.  Or, perhaps such a transformation represents the discretization of income into a categorical variable, specifying which tax bracket the individual is in, and so forth.  

To this end, we ask the basic question as to which distributional transforms commute with to every distributional transform induced by a nondecreasing function.  In so doing, we provide a minimalist characterization of probability distortions in Section \ref{Sec:PD}: A distributional transform is a probability distortion if and only if it commutes with every such nondecreasing function.  Intuitively, this commutation property means that  a (possibly non-strict) ordinal change in the input distribution leads to the same ordinal change in the output distribution.  This result simplifies and extends the characterization in \cite{LSW21}.\footnote{That paper characterized a related class, but utilizing multiple axioms.  Here, only   one property is used to characterize  probability distortions.}

A natural question then presents itself.  Probability transforms commute with respect to all ordinal changes in the input distribution.  This in itself is a very powerful ``robustness'' condition.  It then seems natural to ask whether they also potentially commute with respect to a \emph{larger} class of transforms.  We would then obtain a broader robustness result for free.  For example, one natural class might be the class of all distributional transforms induced as ``pushforward'' measures for potentially non-monotone but measurable functions.  It turns out that the answer here is negative, in a strong sense.  For any distributional transform which does not arise from a nondecreasing function, there is a probability distortion that does not commute with respect to it.  Thus, the monotone transforms are the largest class of distortions for which we may hope to achieve a natural robustness result for probability distortions.

This result is established in Section \ref{Sec:UT}, where we establish that a distributional transform can be identified as a ``utility transform'' if it commutes with respect to the class of probability distortions.  This commutation property means that a distortion of the input distribution leads to the same distortion of the output distribution.  In a formal sense, these results can be understood as providing a duality between the class of utility transforms and probability distortions.\footnote{The two sets are related via a specific Galois connection, as we discuss in the conclusion.  Galois connections are an order-theoretic notion of duality.}

The combination of   results in Sections \ref{Sec:PD} and \ref{Sec:UT} yields the crucial observation that probability distortions and utility transforms are characterized via commutation with each other.
This observation lends support to the informal idea that Yaari's theory is the (unique) natural dual version of the expected utility theory. %an intuitive and commonly accepted view, but with no concrete mathematical meaning, as far as we are aware.
%Our results thus offer a mathematical duality in the sense that one can mathematically derive one theory from the other, and vice versa.

Finally, let us here discuss a third class of transforms; these are what we call the RDU transforms.  These transforms are the composition of a probability distortion and a utility transform.  We discuss RDU transforms in Section \ref{Sec:RDU}.  These transforms do not in general commute with respect to arbitrary utility transforms.  However, there is a sense in which they do commute.  In particular, suppose we have given an RDU transform, based on a strictly increasing and surjective utility function.  Then it can be shown that for any ordinal rescaling of the input distribution results in a \emph{possibly different} ordinal rescaling of the output distribution.  Similarly, any ordinal rescaling of the output distribution comes from a \emph{possibly different} ordinal rescaling of the input distribution.

This property can be viewed as a commutativity property if we extend the distributional transform to a set-valued mapping.\footnote{In the standard sense, where a function $f$ is extended to sets by $f(A) = \{f(x):x\in A\}$.}  As a set-valued mapping, the set of utility transforms commutes with respect to the distributional transform.

It turns out that this property is characteristic of RDU transforms, under an additional hypothesis of monotonicity with respect to first order stochastic dominance.
%communication with a set of utility transforms, instead of each element of the set, characterizes RDU transforms, under monotonicity.
%Set commutation means that applying a utility transform on the input distribution leads to the application of another utility transform on the output distribution, and vice versa.
This property is clearly weaker than commutation with each utility transform, which requires the same utility transform on the input and the output distributions.
A similar result holds if we replace utility transforms by probability distortions.

We discuss the implications of our results for decision theory in Section \ref{sec:6}.
For the most concise presentation, we focus on compactly supported distributions in the main part of the paper.
Proofs of the main results in Sections \ref{Sec:PD}-\ref{Sec:RDU} are postponed to Appendix \ref{Sec:AA}-\ref{app:D}. Results in Sections \ref{Sec:PD}-\ref{Sec:RDU} are extended to
  other sets of utility and distortion functions in Appendix \ref{app:general} and to
  general   spaces of distributions in Appendix \ref{Sec:AD}.

\section{The Model}
Let $\M$ be the set of compactly supported distributions on $\R$.
%The generalization to more general sets will be studied in Appendix \ref{Sec:AD}.
A distribution in $\M$ will be identified with its cumulative distribution function (cdf).
 For a cdf $F$, we define its left quantile as
\begin{equation*}%\label{quantile}
F_L^{-1}(t)  = \inf\{x\in \R: F(x)\ge t\},~~t\in {(0,1]}, %], ~ \text{and}~ F_L^{-1}(0)=\sup\{x\in\mathbb{R}: F(x)=0\},
\end{equation*}
and its right quantile as
     \begin{align*}%\label{quantile1}
     F_R^{-1}(t)  = \inf\{x\in \R: F(x)> t\},~~t\in  {[0,1)}.
     \end{align*}
By increasing and decreasing, we mean in the non-strict sense.

%For $F, G\in \M$, we say that $F\leq_{\rm st} G$ in \emph{stochastic order} if $F(x)\geq G(x)$ for all $x\in\mathbb{R}$.

A \emph{distortion function (DF)} is an increasing function $d:[0,1]\to [0,1]$ with $d(0)=0$ and $d(1)=1$.
A DF is also called a weighting function (e.g., \cite{TK92}).
The set of all DFs is denoted by $\mathcal F_D$.
A \emph{utility function (UF)} is an increasing and continuous function $u:\R\to \R$.
The set of all UFs is denoted by $\mathcal F_U$.
%Let $\mathcal F_D$ be the set of all increasing functions $d:[0,1]\to [0,1]$ with $d(0)=0$ and $d(1)=1$.
%A function in $\mathcal F_D$ is called a \emph{distortion function (DF)}.
% %$DF$ be the set of all increasing functions $g:[0,1]\to [0,1]$ with $d(0)=0$ and $d(1)=1$.
% Similarly, let $\mathcal F_U$ be the set of all increasing and continuous  functions on $\R$. A function in $\mathcal F_U$ is called a \emph{utility function (UF)}.
 For any increasing function $f$, write $f(x+)=\lim_{y\downarrow x}f(x)$ and $f(x-)=\lim_{y\uparrow x}f(x)$.  %For notational convenience, we extend the domain of $F\in\M$ to $[-\infty,\infty]$ by letting $F(-\infty)=0$ and $F(\infty)=1$. %In the following definition, $\M$ represents a general set of distributions,  e.g., $\M=\M_0$ or $\M=\M$.
\begin{definition}\label{many properties def}
\begin{enumerate}[(i)]
\item  For  $d\in \mathcal F_D$,
{the} \emph{probability distortion} generated by $d$, denoted  by $T_d:\M\to\M$, is defined as $
T_d(F)(x) =(d\circ F)(x+),~x\in\mathbb{R}$.\footnote{For a function $f$, $f(x+)$ denotes the right-hand limit at $x$, which exists in this context.}
\item  For  $u\in \mathcal F_U$,
%For a monotone function $u:\mathbb R\to\mathbb R$,
the \emph{utility transform} $T^u: \mathcal M\to \mathcal M$ is defined as a mapping from the distribution of $X$ to the distribution of $u(X)$, i.e., $T^u(F)=F\circ u^{-1},$ where $F$ is treated as a measure on $\R$, and $u^{-1}(A)=\{x\in\mathbb{R}: u(x)\in A\}$ for any Borel measurable set $A\subseteq\mathbb{R}$.
\item For $T,T':\M \to\M$, we say that $T$ \emph{commutes} with $T'$ if $T\circ T'=T'\circ T$, where $\circ$ denotes composition.
\end{enumerate}
\end{definition}

Denote by $\mathcal{U}$  the set of continuous utility transforms:  $\mathcal{U}=\{T^{u}:u\in \mathcal F_U\}$.  Denote by $\mathcal D$ the set of probability distortions:   $\mathcal D=\{T_d: d\in \mathcal F_D\}$.

\begin{remark}Some caution is required regarding the right limit in Definition \ref{many properties def} (i).  
A simple point is that taking a right limit renders $T_d(F)$ right-continuous,  a requirement for a cdf.
We stress that this is not the same as using a right-continuous version of $d$.
For  $d\in\mathcal F_D$, define $\widehat{d}$ as the right-continuous version of $d$, i.e., $\widehat{d}(x)=d(x+)$ for $x\in [0,1]$. Clearly $\widehat{d}$ may not be in $\mathcal F_D$, if for example $d(0+)>0$.  Even if $\widehat{d}\in \mathcal F_D$, generally, $T_{\widehat{d}}\neq T_d$,  as shown in the following simple example.\end{remark}

\begin{example}  If $d$ is not right-continuous, then there exists $x_0\in (0,1)$ such that $\widehat{d}(x_0)=d(x_0+)>d(x_0)$. Take $F = \mathrm{Bernoulli}(1-x_0)$. Then $F(x)=x_0$ for $x\in [0,1)$. For $x\in [0,1)$, $T_d(F)(x)=(d \circ F)(x+)=d(x_0)<\widehat{d}(x_0)=T_{\widehat{d}}(F)(x)$, showing   $T_{\widehat{d}}\neq T_d$. \end{example}

The above discussion illustrates that, although a right limit is used in Definition \ref{many properties def} (i),
it is not without loss to consider only right-continuous distortion functions.
This subtle difference will be significant in the analysis of general distortion and utility functions, treated in Appendix \ref{app:general}.

%\item We say $T$ is  \emph{monotone} if $T(F)\leq_{\rm st} T(G)$ for  $F\leq_{\rm st} G$,~$F,G\in\mathcal{M}$;

\section{Probability distortions}\label{Sec:PD}

Our first result characterizes   probability distortions as the class of distributional transforms that commute with respect to every utility transform.
 \begin{theorem}\label{PTtheorem1}
For a mapping $T: \M\to\M$,
   $T$ commutes with each element of $\mathcal U$ if and only if $T\in \mathcal D$.
  \end{theorem}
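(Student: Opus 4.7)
The plan is to prove the two directions separately. For sufficiency, I would verify that every $T_d$ with $d\in\mathcal F_D$ commutes with every $T^u\in\mathcal U$ by direct cdf computation. Writing $T^u(F)(y)=F(u^{-1}_R(y))$ with $u^{-1}_R(y):=\sup\{z:u(z)\le y\}$, and unwinding the right-limit in the definition of $T_d$, both $T_d(T^u(F))(x)$ and $T^u(T_d(F))(x)$ reduce to $(d\circ F\circ u^{-1}_R)(x+)$. Monotonicity of $u^{-1}_R$ together with right-continuity of $F$ make the two right-limits agree. The verification is routine but requires care when $u$, $d$, or $F$ is not strictly monotone or not continuous.

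For necessity, I assume $T$ commutes with every element of $\mathcal U$ and plan to extract a candidate distortion $d$, verify $T=T_d$ on finitely supported distributions, and then extend to all of $\mathcal M$. The first ingredient is a support constraint: for any $F\in\mathcal M$ and any $u\in\mathcal F_U$ that is the identity on $\operatorname{supp}(F)$, we have $T^u(F)=F$, so commutativity gives $T^u(T(F))=T(F)$; since the family of such UFs can move any point off $\operatorname{supp}(F)$, we conclude $\operatorname{supp}(T(F))\subseteq \operatorname{supp}(F)$. In particular $T(\delta_a)=\delta_a$ for every $a\in\mathbb R$. Setting $B_p:=p\delta_0+(1-p)\delta_1$, the support constraint then forces $T(B_p)=d(p)\delta_0+(1-d(p))\delta_1$ for some $d:[0,1]\to[0,1]$ with $d(0)=0$ and $d(1)=1$.

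For a $k$-atom distribution $F^*:=\sum_{i=1}^k p_i\delta_i$ with cumulative masses $P_j:=\sum_{i\le j}p_i$, I would next apply commutativity with UFs $u_j\in\mathcal F_U$ that are constant $0$ on $(-\infty,j]$ and constant $1$ on $[j+1,\infty)$ (with linear interpolation between). Since $T^{u_j}(F^*)=B_{P_j}$, combining with the support constraint $\operatorname{supp}(T(F^*))\subseteq\{1,\ldots,k\}$ pins down $T(F^*)=\sum_{i=1}^k(d(P_i)-d(P_{i-1}))\delta_i$. Nonnegativity of these atom masses forces $d$ nondecreasing, hence $d\in\mathcal F_D$. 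For a general $k$-atom $F=\sum p_i\delta_{x_i}$ with $x_1<\cdots<x_k$, we have $F=T^v(F^*)$ for any strictly increasing $v\in\mathcal F_U$ with $v(i)=x_i$, so commutativity gives $T(F)=T^v(T(F^*))=T_d(F)$.

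The main obstacle will be extending to general $F\in\mathcal M$, since $T$ is not a priori weakly continuous and no continuous increasing UF can collapse a connected support to a finite set, blocking a direct approximation. My plan is to identify $T(F)(x)=d(F(x+))$ pointwise at a dense set of $x$ (continuity points of $F$) via ramp UFs: for such $x$ and small $\epsilon>0$, choose $u_\epsilon\in\mathcal F_U$ acting as a linear ramp from $0$ to $1$ on $[x,x+\epsilon]$ and constant outside, so that $T^{u_\epsilon}(F)$ has atoms of mass $F(x)$ at $0$ and $1-F((x+\epsilon)-)$ at $1$ plus a small continuous residue on $(0,1)$. Commutativity $T(T^{u_\epsilon}(F))=T^{u_\epsilon}(T(F))$, combined with further UFs collapsing the residue to reduce to the finitely supported case handled above, should yield $T(F)(x)=d(F(x+))$ on a dense set; right-continuity of cdfs then extends this identity to all $x$.
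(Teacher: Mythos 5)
Your ``if'' direction and your analysis of finitely supported distributions are fine (the support constraint via UFs that are the identity on $\operatorname{supp}(F)$, the extraction of $d$ from $B_p$, the ramp functions $u_j$ pinning down $T(F^*)$, and the transport by $T^v$ all work, and this is a genuinely different route from the paper's). But the final extension step, which is the actual crux of the theorem, has a real gap. Your plan to ``collapse the residue to reduce to the finitely supported case'' cannot be carried out exactly: a continuous increasing $u$ can never map the residue on $(0,1)$ to finitely many points while keeping the atoms at $0$ and $1$ separate, so the reduction is only approximate ($G_\epsilon=T^{u_\epsilon}(F)$ converges weakly to a two-point law, or can be pushed close to a three-point law). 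Passing from this approximation to a statement about $T(F)(x)$ requires some continuity or monotonicity of $T$ in the input distribution, and neither is assumed in Theorem \ref{PTtheorem1}; indeed Example \ref{ex:2} of the paper exhibits a commuting $T$ that is not lower-semicontinuous, so any argument of the form ``$G_\epsilon\to B_{F(x)}$, hence $T(G_\epsilon)\to T(B_{F(x)})$'' is unavailable. Commutation only tells you $T(G_\epsilon)(0)=T(F)(x)$ exactly; it gives you no handle on $T(G_\epsilon)$ itself beyond what you already know, so the dense-set identity is not established by this route. (A secondary, more repairable issue: your target identity $T(F)(x)=d(F(x))$ at continuity points of $F$ is false as stated --- for $T(F)=\delta_{F_R^{-1}(1/2)}$, $d=\id_{(1/2,1]}$ and $F=\mathrm U[0,1]$ it fails at $x=1/2$; the correct claim is $T(F)(x)=(d\circ F)(x+)$, which only coincides with $d(F(x))$ off an exceptional set, after which the right-limit argument you sketch does recover the full formula.)

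The paper avoids approximation entirely: it fixes $p\in(0,1)$, shows that the functional $q_p^T(F)=(T(F))_L^{-1}(p)$ satisfies $q_p^T(T^u(F))=u(q_p^T(F))$ for all $u\in\mathcal F_U$, invokes the ordinality characterization of quantiles (Theorem 2 of \cite{FLW22}) to conclude that $q_p^T$ is a left or right quantile at some fixed level $g(p)$ for \emph{every} $F\in\M$ simultaneously, and then reassembles $T(F)$ from these level-$p$ quantiles, with the delicate bookkeeping at discontinuity points of the induced $d$ producing exactly the right-limit $(\widehat d\circ F)(x+)$. To salvage your approach you would need a substitute for that step that controls $T$ on distributions with a continuous part without any continuity hypothesis on $T$ --- for instance by proving (or citing) an FLW22-type rigidity statement yourself --- rather than by weak approximation.
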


%{\color{red}
%Please see whether the following formulation is better. If so, I suggest changing the entire paper.
%The reason is to also be consistent with Section 5, where I introduce weak commutation.
%Let $\mathrm{U}=\{T^{[g]}:g \in \mathcal F_U\}$
%and $\mathcal F_D=\{T_d: d\in \mathcal F_D\}$, and similarly for others (we should define them properly).
%We say that $T$ commutes with  a set $\mathcal T$ of distributional transforms
%if $T\circ T'= T' \circ T$ for all $T'\in \mathcal T$.

%Theorem: For a mapping $T: \M\to\M$,
 % \com{If this is good, please change throughout. }
%  \begin{enumerate}
 % \item[(i)] $T$ commutes with each element of $\mathcal F_U$ if and only if $T\in \mathcal F_D^*$.  \item[(ii)] $T$ commutes with each element of  $\mathcal F_U^L$  if and only if $T\in \mathcal F_D^R$.
  %\end{enumerate}
  %If the above is used, then I have some further notation suggestions:
  %We should use $\mathcal F_U$ or $\mathcal F_U^*$ for the set of utility functions and $\mathcal F_D$ or $\mathcal F_D^*$ for the set of distortion functions.  Then, I think we should use $u$ for a utility function and $d$ for a distortion function.
  %I do not know how much we need these notations in the main paper. If we make the changes, they should be made throughout.
%}

To interpret the above result, the property requires that only the ``ordinal'' content of the input distribution matters for calculating the distributional transform.  This is in the sense that any ordinal rescaling of the input distribution results in the same ordinal rescaling of the output distribution under the distributional transform.  The result then implies that the distributional transform must necessarily be a probability distribution.

%for a distributional transform $T$, if rescaling the input distribution is the same as rescaling the output distribution with the same utility transform no matter what that utility transform is, then $T$ is a probability distortion.

Theorem \ref{PTtheorem1} is closely related to a characterization result in the literature, Theorem 1 of \cite{LSW21}, which states that for $T:\M\to\M$,
$T$ is monotone, lower-semicontinuous and commuting with each $T^u$ for all $u\in\mathcal F_U^\diamond$ if and only if $T=T_d$ for some right-continuous $d\in\mathcal F_D$, where $\mathcal F_U^\diamond$ is the set of all strictly increasing and continuous functions $u$ satisfying $u(\mathbb{R})=\mathbb{R}$. To explain the terminology involved, for $F, G\in \M$, we write $F\leq_{\rm st} G$  if $F(x)\geq G(x)$ for all $x\in\mathbb{R}$; $T$ is \emph{monotone} if $T(F)\leq_{\rm st} T(G)$ for $F\leq_{\rm st} G,~ F, G\in \mathcal M$; $T$ is \emph{lower-semicontinuous} if $T(F)\leq_{\rm st} G\in \mathcal M$ whenever $\{F_n\}_{n\in \N}\subseteq \mathcal M$ converges weakly to $F\in \mathcal M$ and satisfies $T(F_n)\leq_{\rm st} G$. In contrast to these properties, our results in Theorem \ref{PTtheorem1} only require one condition to pin down probability distortions. Moreover, Theorem \ref{PTtheorem1} characterizes more general probability distortions than that of \cite{LSW21}, as we do not have lower-semicontinuity.
Below we present an example of a probability distortion $T$ that is not lower-semicontinuous.
 This example shows that Theorem \ref{PTtheorem1} is not only conceptually simpler than the corresponding result of \cite{LSW21}, but also covers more cases.
  \begin{example}\label{ex:2} Define $T(F)=\delta_{F_R^{-1}(1/2)}$, where $\delta_x$ is the point mass distribution with probability $1$ at $x$. Then, $T$ is  monotone and $T\circ T^u=T^u\circ T$ for all $u\in \mathcal F_D$.
  However, $T$ does not satisfy lower semicontinuity. This can be checked as follows. Let $F_n= \mathrm{Bernoulli}(1/2-1/n),~n\geq  2$, $F_0= \mathrm{Bernoulli}(1/2)$ and $G= \mathrm U[0,1]$. Clearly, $F_n$ weakly converges to $F_0$ as $n$ tends to infinity.  By direct calculation, it follows that for $p\in (0,1)$,  $$q_p^T(F_n)=(F_{n})_R^{-1}(1/2)=0<p=G_L^{-1}(p),$$
  where $q_p^T(F)$ is the $p$-th left quantile of $T(F)$, implying $T(F_n)\leq_{\rm st} G$. However,
  $$q_p^T(F_0)=(F_{0})_R^{-1}(1/2)=1>p=G_L^{-1}(p),~p\in(0,1),$$
  implying that $T(F_0)\leq_{\rm st} G$ does not hold.
  Hence $T$ is not lower-semicontinuous.  Applying Theorem \ref{PTtheorem1}, $T$ is still a probability distortion in the sense of (i) of Definition \ref{many properties def}. Let $d(x)=\id_{(1/2,1]}(x),~x\in [0,1]$. Note that $d\in \mathcal F_D$ but $d$ is not right-continuous. Moreover,  $$T_d(F)(x)=d\circ F(x+)=\id_{[ F_R^{-1}(1/2),\infty)}(x)=T(F)(x),~x\in\mathbb{R},$$
  and this verifies that $T=T_d$.
  \end{example}

 The fact that probability distortions can be characterized with only one property may be surprising. It is built on a related result in the recent literature.  Our proof  of Theorem \ref{PTtheorem1}  uses some results of \cite{FLW22}, where quantile functionals are shown to be characterized by only one property, called ordinality, which means commutation with utility transforms for real-valued mappings.

  \section{Utility transforms}\label{Sec:UT}
In a parallel fashion to Section \ref{Sec:PD}, we characterize utility transforms via commutation with probability distortions.  This result establishes that utility transforms are the ``maximal'' class of distributional transforms with respect to which each probability distortion commutes.
\begin{theorem}\label{PTtheorem2}
For a mapping $T: \M\to\M$, $T$ commutes with each element in  $\mathcal{D}$ if and only if $T\in\mathcal{U}$.
  \end{theorem}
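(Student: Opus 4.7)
The plan is to handle the two implications separately. For the ``if'' direction, a routine unwinding of Definition \ref{many properties def} shows that, for any $u\in\mathcal F_U$, $d\in\mathcal F_D$, and $F\in\M$, both $T^u\circ T_d(F)$ and $T_d\circ T^u(F)$ produce the cdf $y\mapsto d(F(v(y)))$, where $v(y):=\sup\{x\in\R:u(x)\le y\}$; continuity and monotonicity of $u$ make $v$ increasing and right-continuous, which is what is needed for the two orders of composition to agree after the right-limits in Definition \ref{many properties def}(i) are taken.

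For the converse, I would assume $T$ commutes with every $T_d\in\mathcal D$ and first pin down $T$ on point masses. A direct computation gives $T_d(\delta_x)=\delta_x$ for every $d\in\mathcal F_D$, so commutation forces $T(\delta_x)$ to be a common fixed point of all probability distortions. Taking $d=\id_{\{1\}}\in\mathcal F_D$ (with $d(t)=0$ for $t<1$ and $d(1)=1$) yields $T_d(G)=\delta_{\sup\mathrm{supp}(G)}$ for every compactly supported $G$, so the only such common fixed points are point masses. Hence there exists $u\colon\R\to\R$ with $T(\delta_x)=\delta_{u(x)}$ for every $x\in\R$.

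Next I would extract a quantile identity. For $p\in(0,1]$, the distortion $d_p:=\id_{[p,1]}\in\mathcal F_D$ satisfies $T_{d_p}(F)=\delta_{F_L^{-1}(p)}$ for every $F\in\M$ by right-continuity of $F$. Commutation then yields
\[
\delta_{u(F_L^{-1}(p))}=T\bigl(T_{d_p}(F)\bigr)=T_{d_p}\bigl(T(F)\bigr)=\delta_{T(F)_L^{-1}(p)},
\]
whence $T(F)_L^{-1}(p)=u(F_L^{-1}(p))$. A parallel argument with $\id_{(p,1]}$ produces $T(F)_R^{-1}(p)=u(F_R^{-1}(p))$ for $p\in[0,1)$. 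Applying the first identity to $F=\tfrac12\delta_a+\tfrac12\delta_b$ with $a<b$ forces $u(a)\le u(b)$, so $u$ is increasing. Applying both identities to $F=\mathrm U[-N,N]$, for which $F_L^{-1}(p)=F_R^{-1}(p)=-N+2Np$, shows that $p\mapsto u(-N+2Np)$ is both left- and right-continuous on $(0,1)$, so $u$ is continuous on $(-N,N)$. Letting $N\to\infty$ gives $u\in\mathcal F_U$.

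Finally, since $u\in\mathcal F_U$, the standard identity $(T^u(F))_L^{-1}(p)=u(F_L^{-1}(p))$ holds for $p\in(0,1]$ (it follows from closedness of $\{u\le y\}$). Combined with the quantile identity derived above, $T(F)$ and $T^u(F)$ share the same left-quantile function on $(0,1]$ and therefore coincide, giving $T=T^u\in\mathcal U$. The step I expect to be the main obstacle is the continuity of $u$: each of the two quantile identities only produces one-sided continuity of $u$ along quantile functions, and it is the specific choice of the uniform test distribution that lets them combine into genuine continuity of $u$ on $\R$.
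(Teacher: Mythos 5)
Your ``only if'' argument is correct, and it takes a genuinely different route from the paper's. The paper represents every $F$ with $F(-n)=0$, $F(n)=1$ as $T_d(F_{U_n})$ for a right-continuous $d$ applied to the uniform distribution on $[-n,n]$, transfers quantiles through the commutation to build functions $u_n$ on $(-n,n]$, pastes them into a single increasing left-continuous $u$ with $T=T^u$, and only then proves continuity of $u$ by contradiction, using a non-right-continuous distortion applied to a uniform distribution. You instead exploit the extreme distortions $\id_{[p,1]}$ and $\id_{(p,1]}$, which collapse any $F\in\M$ to $\delta_{F_L^{-1}(p)}$ and $\delta_{F_R^{-1}(p)}$ respectively; after identifying $T$ on point masses via the fixed-point observation, commutation immediately yields both quantile identities $T(F)_L^{-1}(p)=u(F_L^{-1}(p))$ and $T(F)_R^{-1}(p)=u(F_R^{-1}(p))$, from which monotonicity (two-point test distributions) and continuity (uniform test distributions, using left-continuity of left quantiles and right-continuity of right quantiles) of $u$ follow directly. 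This is shorter and avoids the pasting construction; note that your continuity step hinges on using the non-right-continuous distortions $\id_{(p,1]}\in\mathcal F_D\setminus\mathcal F_D^R$, which is consistent with the paper's Proposition \ref{PTtheorem4}(ii), where commutation with $\mathcal D^R$ alone only yields $u\in\mathcal F_U^L$.

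The one place your write-up is not sound as stated is the ``if'' direction. You justify the agreement of $T_d\circ T^u$ and $T^u\circ T_d$ by saying that continuity of $u$ makes $v=u_R^{-1}$ increasing and right-continuous, ``which is what is needed.'' It is not: $u_R^{-1}$ is increasing and right-continuous for \emph{every} increasing $u$, yet commutation fails for discontinuous $u$ when $d$ is not right-continuous (this is exactly the counterexample in the proof of Proposition \ref{PTtheorem4}(i), where $u$ jumps at $x_0$, $v$ is constant to the right of $x_0$, and $T_d\circ T^u(F_U)(x_0)=d(x_0)<d(x_0+)=T^u\circ T_d(F_U)(x_0)$). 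The property you actually need, and which continuity of $u$ delivers, is that $v(z)>v(y)$ for all $z>y$ whenever $v(y)$ is finite; this guarantees that the right limit $\lim_{z\downarrow y}d(F(v(z)))$ is taken along points strictly above $v(y)$ and hence equals $\lim_{w\downarrow v(y)}d(F(w))$, which is $T^u\circ T_d(F)(y)$. With that correction the ``if'' direction matches the paper's computation in the proof of Theorem \ref{PTtheorem1}, and also matches where the right limits genuinely matter (neither composition equals the plain value $d(F(v(y)))$ in general).
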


Theorem \ref{PTtheorem2} establishes that for a distributional transform $T$, if distorting the input distribution is the same as distorting the output distribution with the same probability distortion, then the distributional transform is a probability distortion.

Although being nicely parallel with Theorem \ref{PTtheorem1}, the idea of pinning down utility transforms via probability distortions in Theorem \ref{PTtheorem2} is novel.
We are not aware of any results in the literature that characterizing utility transforms.
Mathematically, the proof of Theorem \ref{PTtheorem2} does not follow directly from that of  Theorem \ref{PTtheorem1}, because the roles of probability distortions (changing the distribution function) and utility transforms (changing the quantile functon) are not exactly symmetric.

%We can conclude from  that probability distortions and utility transforms can be mutually determined via commutation  with each other.
%\begin{corollary}\label{cor:important} For a mapping  $T: \M\to\M$,
%  \begin{enumerate}
%  \item[(i)] $T$ commutes with each element of $\mathcal{U}$ if and only if $T\in\mathcal F_D$;
%  \item[(ii)] $T$ commutes with each element of $\mathcal{D}$ if and only if $T\in\mathcal U$.
%  \end{enumerate}
 % The same statements hold true if $\mathrm{U}$ is replaced by $\mathrm{U}^L$ and $\mathrm{D}$ by $\mathrm{D}^R$.
%\end{corollary}

Theorems \ref{PTtheorem1} and \ref{PTtheorem2} offer a mathematical duality between the expected utility theory and the dual utility theory of \cite{Y87}, in the sense that one can be derived from the other.
Taking the the class of increasing and continuous utility functions as given, if we look at the expected value of any transform that commutes with these utility transforms, then we arrive at a dual utility.
Conversely,
taking the the class of all distortion functions as given, if we look at the expected value of any transform that commutes with these probability distortions, then we arrive at an expected utility.
This connection is new to the literature.

The separation of  marginal utility (modelled by UF) and the probabilistic risk (modelled by DF) attitudes under RDU (see Definition \ref{def} in Section \ref{Sec:RDU}) are considered by \cite{W94}, where these attitudes are characterized independently of each other to allow for comparison of risk attitudes across decision makers.   This perspective is different from Theorems \ref{PTtheorem1}-\ref{PTtheorem2}, where utility transforms and probability distortions are mutually determined by commutation with each other.

%\begin{corollary} For a mapping  $T: \M\to\M$,
%  \begin{enumerate}
%  \item[(i)]  $T$ commutes with $T^u$ for all $u\in \mathcal U^L$ if and only if $T=T_d$ for some $d\in \mathcal F_D^R$;
%  \item[(ii)]  $T$ commutes with $T_d$ for all $d\in \mathcal F_D^R$ if and only if $T=T^u$ for some $u\in \mathcal U^L$.
%  \end{enumerate}
%\end{corollary}

\section{RDU transforms}\label{Sec:RDU}
%Let $\mathcal F_U$ denote the set of all strictly increasing and continuous functions $u$ such that $u(\R)=\R$, and

In the two sections above, we have characterized both the class of utility transforms
and the class of probability distortions. A natural question is whether an RDU transform, which is a composition of a utility transform and a probability distortion, also admits a characterization in a similar fashion. We address this question in this section.

We first formally define an RDU transform.
%Let $\mathcal F_U$ denote the set of all strictly monotone and continuous functions $u$ such that $u(\R)=\R$. The reason to consider decreasing functions in $\mathcal F_U$ is because decreasing utility transforms have similar commutation properties to increasing ones.
\begin{definition}\label{def}
A distributional transform $T:\M\to\M$ is an \emph{RDU transform} if there exist $d\in\mathcal F_D$ and $u\in\mathcal F_U$ such that $T=T_d\circ T^u$. Here, $u$ is called the UF of $T$ and $d$ is called the DF of $T$.
\end{definition}

 To understand special properties of RDU transforms, we first observe that an RDU transform does not commute with  utility transforms  in general, as implied by Theorem \ref{PTtheorem1}.
 Essentially, this is because utility transforms do not commute with each other; for instance, $2(x+1)$ is not equal to $2x+1$.
 The same non-commutation of RDU  transforms holds also with probability distortions.
 Therefore, in order to characterize RDU transforms,
 we need to seek for weaker properties than commutation with utility transforms (or probability distortions).
Although a utility transform $T^{u_1}$ with $u_1\in \mathcal F_U$ does not commute with another one $T^{u_2}$,
 assuming that $u_1$ is strictly monotone and surjective (i.e.,  $u_1(\R)=\R$),
 there exists $u_3\in \mathcal F_U$ such that
\begin{align}
\label{eq:commute-u}
T^{u_3}\circ T^{u_1}= T^{u_1} \circ T^{u_2}; \mbox{~equivalently,~} u_3\circ u_1= u_1 \circ u_2,
\end{align}
and such $u_3$ is  given by $u_3 = u_1 \circ u_2 \circ u_1^{-1}$.
  This inspires to use the property of set commutation.
  For a set $\mathcal T$ of functions and a function $T$,
  we write $\mathcal T\circ T = \{T'\circ T: T'\in \mathcal T\}$,
  and similarly,
 $ T\circ \mathcal T = \{T\circ T': T'\in \mathcal T\}$.

\begin{definition}  A distributional transform
  $T:\M\to\M$ is said to \emph{commute} with a set $\mathcal T$ of distributional transforms on $\mathcal M$
  if  $\mathcal T\circ T=T\circ \mathcal T$.
\end{definition}

Stated equivalently,  $\mathcal T\circ T=T\circ \mathcal T$ means that
for any $T_R\in\mathcal T$ there exists $T_L\in \mathcal T$ such that $T_L\circ T=T\circ T_R$,
 and for any $T_L\in \mathcal T$ there exists $T_R\in \mathcal T$ such that $T_L\circ T=T\circ T_R$.
Clearly,  commutation with  the set $\mathcal T$ is weaker than commutation with each element of $\mathcal T$, which further requires   $T_L=T_R$ in  both statements.

%Let $\mathcal F_U=\{T^u:u \in \mathcal F_U\}$.
\begin{theorem}\label{RDUM}
Let  $T:\M\to\M$  be a monotone distributional transform. Then
 $T $ commutes with $\mathcal{U}$ if and only if
    $T=T_d\circ T^u$, where $u$ is strictly increasing and surjective.
 \end{theorem}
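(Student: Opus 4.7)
\emph{If direction.} Assuming $T = T_d \circ T^u$ with $d \in \mathcal F_D$ and $u \in \mathcal F_U$ strictly increasing and surjective, I would verify set commutation directly. Because $u^{-1} \in \mathcal F_U$, the map $v \mapsto u^{-1} \circ v \circ u$ is a bijection on $\mathcal F_U$, and $T_d$ commutes pointwise with every element of $\mathcal U$ by Theorem \ref{PTtheorem1}. For any $v \in \mathcal F_U$, setting $w := u^{-1} \circ v \circ u$ yields
\begin{equation*}
T^v \circ T = T^v \circ T_d \circ T^u = T_d \circ T^v \circ T^u = T_d \circ T^u \circ T^w = T \circ T^w,
\end{equation*}
and the reverse inclusion $T \circ \mathcal U \subseteq \mathcal U \circ T$ is analogous.

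\emph{Only if direction.} The plan is to recover $u$ from $T$'s action on point masses and then reduce to Theorem \ref{PTtheorem1}. Writing $\mu_x := T(\delta_x)$, evaluating $\mathcal U \circ T = T \circ \mathcal U$ at $\delta_0$ yields
\begin{equation*}
\{T^v(\mu_0) : v \in \mathcal F_U\} = \{T(\delta_{v(0)}) : v \in \mathcal F_U\} = \{\mu_y : y \in \R\}.
\end{equation*}
Monotonicity of $T$ makes the right-hand set a chain under $\leq_{\rm st}$; if $\mu_0$ had support at two distinct points $a<b$, I would construct crossing $v_1, v_2 \in \mathcal F_U$ (satisfying $v_1(a) < v_2(a)$ but $v_1(b) > v_2(b)$) producing $\leq_{\rm st}$-incomparable pushforwards, a contradiction. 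So $\mu_0$ is a point mass, and, since set commutation identifies each $\mu_x$ with a pushforward of $\mu_0$, so is every $\mu_x$; this defines $u:\R \to \R$ via $T(\delta_x) = \delta_{u(x)}$. Monotonicity of $T$ makes $u$ increasing. If $u$ were constant on some $[x,y]$, a stochastic-order sandwich would force $T(F) = \delta_{u(x)}$ for every $F$ supported on $[x,y]$; propagating this via set commutation with a homeomorphism $[x,y] \to [a,b]$ would force $u$ constant on $\R$, but then $T$ would be a constant map, which visibly fails set commutation with $\mathcal U$. Hence $u$ is strictly increasing. Applying the other direction of set commutation with translations $w_c: y \mapsto y + c$ at $\delta_0$ yields $u(0) + c \in u(\R)$ for every $c$, so $u$ is surjective, and strict monotonicity plus surjectivity on $\R$ forces continuity, giving $u, u^{-1} \in \mathcal F_U$.

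With $u$ in hand, I would set $\tilde T := T \circ T^{u^{-1}}$ and show that it commutes pointwise with each $T^v$. The point-mass analysis makes the ``conjugation'' in set commutation explicit: the $v$ paired with $w$ in $T^w \circ T = T \circ T^v$ necessarily satisfies $v = u^{-1} \circ w \circ u$. A direct manipulation then gives $\tilde T \circ T^v = T \circ T^{u^{-1} \circ v} = T^v \circ \tilde T$ for every $v \in \mathcal F_U$. Theorem \ref{PTtheorem1} then delivers $\tilde T = T_d$ for some $d \in \mathcal F_D$, and composing on the right by $T^u$ recovers $T = T_d \circ T^u$. The main obstacle will be showing that $T$ maps point masses to point masses: it requires an explicit crossing construction that exploits the tension between monotonicity (which forces $\{\mu_y : y \in \R\}$ to be totally ordered under $\leq_{\rm st}$) and the richness of $\mathcal F_U$ (which can produce $\leq_{\rm st}$-incomparable pushforwards from any non-degenerate distribution).
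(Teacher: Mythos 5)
Your proof is correct in substance but follows a genuinely different route from the paper's in the ``only if'' direction. The paper works through the quantile functionals $q_p^T$: set commutation makes each $q_p^T$ commute with $\mathcal U$ in the sense of Proposition \ref{mappings}, which (via the quantile characterization of \cite{FLW22}) yields $q_p^T=h_p\circ\widehat\rho_p$ with $h_p$ an increasing homeomorphism of $\R$; a separate argument (using $u(x)=\min(x,b)$ to show that only the identity commutes with all of $\mathcal F_U$) identifies all the $h_p$, and only then does the paper obtain $T(\delta_x)=\delta_{h(x)}$, conjugate by $T^{h^{-1}}$, and invoke Theorem \ref{PTtheorem1}. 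You bypass Proposition \ref{mappings} entirely: you prove directly that $T$ sends point masses to point masses by playing the $\leq_{\rm st}$-chain structure of $\{T(\delta_y)\}$ (from monotonicity) against incomparable pushforwards of a non-degenerate $T(\delta_0)$, then read off the properties of $u$ (strictness via the flat-part propagation and the failure of set commutation for constant maps, surjectivity via translations, hence continuity), pin down the conjugation $v=u^{-1}\circ w\circ u$ by evaluating at point masses, and reduce to Theorem \ref{PTtheorem1} through $\tilde T=T\circ T^{u^{-1}}$; your right-composition even avoids the paper's final swap $T^h\circ T_d=T_d\circ T^h$. This buys a more self-contained and elementary argument (Theorem \ref{PTtheorem1} is the only external input), at the cost of two steps you leave at sketch level: the crossing construction and the flat-part propagation. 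Both are realizable, but note that the bare conditions $v_1(a)<v_2(a)$ and $v_1(b)>v_2(b)$ do not by themselves force incomparability of the pushforwards for an arbitrary non-degenerate $\mu_0$; you need to choose, say, $v_1=\mathrm{id}$ and $v_2$ a sufficiently strong compression toward an interior point $m\in(a,b)$ of two support points $a<b$, so that the compressed cdf drops below (respectively rises above) $F_{\mu_0}$ at suitable points on either side of $m$. With that construction written out, and the routine verification that a constant map indeed fails $\mathcal U\circ T=T\circ\mathcal U$, your argument is complete and delivers exactly the paper's conclusion.
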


 The reason why the UF of $T$ in Theorem \ref{RDUM} needs to be strictly increasing and surjective
can be seen from the above discussion; we can check that both properties of $u_1$ are needed for the existence of $u_3 $ satisfying \eqref{eq:commute-u} for each $u_2 $ and the existence of $u_2$ satisfying \eqref{eq:commute-u}  for each $u_3$.

Set commutation with $\mathcal T$ has a similar interpretation to commutation in Section \ref{Sec:UT} as the effect of rescaling the input and the output distributions, but now the  output rescaling can be different from the input one; yet these two scaling operations belong to the same class $\mathcal{U}$. If we remove monotonicity of $T$ in Theorem \ref{RDUM}, we will include mappings like $T=T_d\circ T^u$ with strictly decreasing and surjective $u$ in the conclusion of Theorem \ref{RDUM}, which is undesirable in decision theory.

The duality between utility transforms and probability distortions
  hints at a similar result to Theorem \ref{RDUM} formulated with probability distortions.
This intuition checks out but it requires some technical work to formalize.  For this, we need to consider a set smaller than $\mathcal D$.  Let $\mathcal F_U^L$ be the set of all increasing and  left-continuous functions on $\R$ and let $\mathcal F_D^R$ be the set of all right-continuous functions  in $\mathcal F_D$. Note that  $\mathcal F_U\subseteq\mathcal F_U^L$ and $\mathcal F_D^R\subseteq\mathcal F_D$. The utility transform can be naturally extended to $\mathcal F_U^L$ and the RDU can also be extended to the case with increasing and left-continuous UF. We denote $\mathcal{U}^L=\{T^{u}:u \in \mathcal F_U^L\}$ and $\mathcal D^R=\{T_d: d\in \mathcal F_D^R\}$.

\begin{theorem}\label{RDUM2}
Let  $T:\M\to\M$  be a monotone distributional transform. Then
     $T $ commutes with $\mathcal{D}^R$ if and only if
    $T$ is an RDU transform with left-continuous UF and strictly increasing and continuous DF.
 \end{theorem}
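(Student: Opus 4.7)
\emph{Sufficiency.} Assume $T = T_d \circ T^u$ with $d \in \mathcal F_D^R$ strictly increasing and continuous and $u \in \mathcal F_U^L$. Since $d$ is a homeomorphism of $[0,1]$, for any $d_1 \in \mathcal F_D^R$ the conjugate $d_2 := d^{-1}\circ d_1\circ d$ again lies in $\mathcal F_D^R$ (its right-continuity follows from continuity of $d$ and $d^{-1}$ together with right-continuity of $d_1$). Using the extension of Theorem \ref{PTtheorem1} to $\mathcal F_U^L$ in Appendix \ref{app:general}, $T^u$ commutes with $T_{d_2}$, and a direct composition yields
\[
T_{d_1}\circ T = T_{d_1}\circ T_d\circ T^u = T_{d_1\circ d}\circ T^u = T_{d\circ d_2}\circ T^u = T_d\circ T_{d_2}\circ T^u = T_d\circ T^u\circ T_{d_2} = T\circ T_{d_2}.
\]
The reverse inclusion $T\circ\mathcal D^R \subseteq \mathcal D^R\circ T$ is symmetric via $d_1 := d\circ d_2\circ d^{-1}$.

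\emph{Necessity, Step 1 (extract $u$ and a criterion).} Assume $T$ monotone and commuting with $\mathcal D^R$. Since $d(0)=0$ and $d(1)=1$ for every $d\in\mathcal F_D^R$, a direct calculation gives $T_d(\delta_y)=\delta_y$. Combining with commutation, $T(\delta_y)$ is a fixed point of $T_{d'}$ for every $d'\in\mathcal F_D^R$ (the two inclusions of set commutation jointly making every $d'$ available). The only values in $[0,1]$ fixed by every $d'\in\mathcal F_D^R$ are $\{0,1\}$, so $T(\delta_y)=\delta_{u(y)}$ for some increasing $u:\R\to\R$. I also extract a criterion: if $T(G)=\delta_c$, then commutation with $\id_{[c',1]}\in\mathcal F_D^R$ applied to $G$ gives $\delta_{u(G_L^{-1}(c'))}=T_{\phi(\id_{[c',1]})}(T(G))=\delta_c$, so $u(G_L^{-1}(c'))=c$ for every $c'\in(0,1]$; hence if $T(G)$ is a point mass then $u$ must be constant on the support of $G$.

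\emph{Necessity, Step 2 (quantile identity).} For each $r\in(0,1]$, commutation yields $d_r\in\mathcal F_D^R$ with $T\circ T_{d_r}=T_{\id_{[r,1]}}\circ T$, so $T(T_{d_r}(F))=\delta_{T(F)_L^{-1}(r)}$ is a point mass for every $F\in\M$. By the criterion above, $u$ must be constant on the support of $T_{d_r}(F)$ for every $F$. In the nontrivial case where $u$ is not globally constant, choose an interval $[a,b]$ on which $u$ is strictly increasing and take $F=U[a,b]$; then $T_{d_r}(F)$ has CDF $x\mapsto d_r((x-a)/(b-a))$, whose support is a non-singleton subinterval of $[a,b]$ unless $d_r$ takes only the values $0$ and $1$. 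Since $u$ is strictly increasing on $[a,b]$ it cannot be constant on any subinterval, forcing $d_r=\id_{[t(r),1]}$ for some $t(r)\in(0,1]$. Consequently $T(T_{d_r}(F))=T(\delta_{F_L^{-1}(t(r))})=\delta_{u(F_L^{-1}(t(r)))}$, and we obtain the quantile identity
\[
T(F)_L^{-1}(r) = u\bigl(F_L^{-1}(t(r))\bigr) \quad\text{for all $F\in\M$ and $r\in(0,1]$.}
\]

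\emph{Necessity, Step 3 and conclusion.} A symmetric argument applied to $\bar d_s:=\id_{[s,1]}$ on the input side yields $d^{*}_s=\id_{[\sigma(s),1]}$ with $T(F)_L^{-1}(\sigma(s))=u(F_L^{-1}(s))$. Strict monotonicity of $\sigma$ and $t$: taking $F=U[a,b]$ with $u$ strictly increasing on $[a,b]$ makes $s\mapsto u(a+s(b-a))=T(F)_L^{-1}(\sigma(s))$ strictly increasing in $s$, so $\sigma(s_1)=\sigma(s_2)$ for $s_1<s_2$ is impossible; symmetrically for $t$. Combining the two identities forces $t\circ\sigma=\sigma\circ t=\mathrm{id}$, so $t:(0,1]\to(0,1]$ is a strictly increasing bijection; as a strictly increasing surjection between intervals it is automatically continuous. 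Setting $d_0:=t^{-1}$ (with $d_0(0):=0$) yields a strictly increasing continuous DF, and the quantile identity rewrites as $T(F)_L^{-1}(r)=u(F_L^{-1}(d_0^{-1}(r)))$, which is precisely the quantile function of $T_{d_0}\circ T^u$ applied to $F$; since the left quantile determines a compactly supported distribution, $T=T_{d_0}\circ T^u$. Finally, left-continuity of $u$ is inherited from left-continuity of $T(F)_L^{-1}=u\circ F_L^{-1}\circ t$: choosing $F$ uniform on an interval makes $F_L^{-1}\circ t$ continuous and strictly increasing through any prescribed $y\in\R$, so any left discontinuity of $u$ at $y$ would violate the left-continuity of $T(F)_L^{-1}$. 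The principal obstacle is the degeneracy claim $d_r=\id_{[t(r),1]}$ in Step 2, which drives everything via the criterion from Step 1; the remaining work is bookkeeping with the quantile identity.
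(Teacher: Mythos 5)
Your sufficiency argument is fine and essentially the paper's. The necessity argument, however, has a genuine gap exactly at the step you identify as pivotal. To force $d_r=\id_{[t(r),1]}$ you ``choose an interval $[a,b]$ on which $u$ is strictly increasing'' whenever $u$ is not globally constant. That dichotomy is false for increasing functions: a left-continuous step function, or a Cantor-type function, is non-constant yet strictly increasing on no interval. This is not a pathology you may exclude, because the theorem's own class contains such cases: $T=T^u$ with $u(x)=\id_{(0,\infty)}(x)$ is an RDU transform with left-continuous UF and identity DF, hence (by the ``if'' direction) monotone and commuting with $\mathcal D^R$, and for it your Step 2 cannot even start; the same device is reused later to prove strict monotonicity of $t$ and $\sigma$ and the cancellation $t\circ\sigma=\mathrm{id}$, so the whole chain after Step 1 is unsupported in these cases. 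The gap is repairable within your framework: if $d_r$ is not an indicator, its left-quantile set contains two points $q_1<q_2$, and applying your Step-1 criterion to $F=U[a,b]$ for \emph{every} $a<b$ gives $u(a+(b-a)q_1)=u(a+(b-a)q_2)$; such pairs sweep all pairs $s<t$, forcing $u$ to be globally constant. But then you must actually finish the constant case (monotonicity plus $T(\delta_y)=\delta_c$ and $\delta_m\le_{\rm st}F\le_{\rm st}\delta_M$ give $T\equiv\delta_c=T_{\mathrm{id}}\circ T^c$), which your write-up leaves as an unaddressed ``trivial case.''

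Two further points. First, Step 3 is not ``symmetric'': there the unknown distortion $d^*_s$ acts on the \emph{range} of $T$, whose attained cdf values need not cover $[0,1]$ at that stage of the argument, so you cannot force $d^*_s$ to be a global indicator the way you forced $d_r$ (which acts on all of $\M$); the identity $T(F)_L^{-1}(\sigma(s))=u(F_L^{-1}(s))$, and hence the bijectivity and continuity of $t$ and the continuity of the DF $d_0=t^{-1}$, need a genuine argument, not an appeal to symmetry. Second, for comparison, the paper takes a different route that sidesteps both issues: it evaluates $T$ on two-point distributions $B^{\alpha}_{y,z}$, extracts a candidate distortion $g(\alpha)=T(B^{\alpha}_{y_0,z_0})(x_0)$ from a single non-degenerate value, derives the functional equations $g\circ d_3=d_1\circ g$ and $d_2\circ g=g\circ d_1$ to show $g$ is a strictly increasing continuous DF, and then reduces to the already-proved commutation-with-$\mathcal D^R$ characterization (equation \eqref{Eq:utility} in the proof of Theorem \ref{PTtheorem2}) applied to $T_{g^{-1}}\circ T$, handling the degenerate $\{0,1\}$-valued scenario separately. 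Your quantile-level approach could work, but as written it does not cover transforms whose UF has no interval of strict increase, nor the constant case, and Step 3 is incomplete.
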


The interpretation of Theorem \ref{RDUM2} is similar to that of Theorem \ref{RDUM}. Roughly, RDU transforms are precisely those which are ``exchangeable" (in the sense of $T\circ \mathcal T=\mathcal T\circ T$) with either scaling or distortion as a whole class.

%Theorem \ref{RDUM} offers a very simple characterization on RDU, but we are unable to find an interpretation in Theorem \ref{RDUM} economically. %states that for a distributional transform $T$, if utilizing the input distribution with one utility function is the same as utilizing the output distribution with another utility function and meanwhile utilizing the output distribution with one utility function is the same as utilizing the input distribution with another utility function, then the distributional transform is an RDU transform.

%\begin{remark}
%With slight adjustment, an RDU transform can be characterized by set commutation with the set of probability distortions.
% We omit this formulation due to its similarity to weak commutation with utility transforms. \com{maybe explain this in the appendix.}
% \com{See whether this is ok.}
%\end{remark}

\section{Discussions}
\label{sec:6}
Our results offer a fundamental symmetry between utility transforms and probability distortions.
In spirit,
  this is similar to a duality between the  expected utility  (EU) theory of \cite{vNM47} and the dual utility (DU) theory  of \cite{Y87}.
  Nevertheless, the duality between EU and DU is based on parts of the representation, rather than behavior.
  %conceptual but not mathematical one.
EU assumes the independence axiom (among others) to arrive at the preference representation $F\mapsto \int u(x)\d F(x)$ for some utility function $u \in \mathcal F_U$.
DU assumes a dual-independence axiom (plus the same other axioms)
to arrive at the preference representation $F\mapsto \int x \d (d\circ F)(x)$ for some distortion function $d\in \mathcal F_D$.
Although conceptually plausible, it is not clear  how one could start from the independence axiom to derive the dual independence axiom, and vice versa.
Our results in Theorems \ref{PTtheorem1} and \ref{PTtheorem2} gives a way to obtain from utility transforms to probability distortions, and also the other way around.
Since EU and DU are represented by the expected value of the transformed risk distribution,
our results also suggest that EU and DU can be mathematically derived from each other.

 Regarding RDU transforms, there are also some interesting observations. %Let $\mathcal F_D$ be the set of  continuous and strictly increasing functions in $\mathcal F_D$ and $\mathcal F_U^*$ be the set of strictly increasing functions in $\mathcal F_U$.
 Let $\mathcal R$ be the class of RDU transforms of the form $T_d\circ T^{u}$
 where $d\in \mathcal F_D$ and $u\in \mathcal F_U$.
 First, note that  the order of the composition of  $T_d $ and $T^{u}$ for an RDU transform does not matter, justified by either Theorem \ref{PTtheorem1} or \ref{PTtheorem2}.
 Second, due to the same reason, the class  $\mathcal R$ is closed under composition,
 and so are $\mathcal{U}$ and $\mathcal D$.
 Therefore, $\mathcal R$ equipped with composition is a semigroup with $\mathcal U$ and $\mathcal D$ being its sub-semigroups. Moreover, $\mathcal R$ is the smallest semigroup containing both $\mathcal U$ and $\mathcal D$.
 This, from an algebraic perspective, may justify that RDU is the most natural generalization of EU and DU compared to all other ways, such as a linear combination or a maximum of EU and DU (which are obviously strange).

 Next, we rephrase our results in the context of decision making.
Assume that a decision maker has a unique \emph{perception}, which describes  how she perceives  a distribution in $\mathcal M$  presented to her, and  her preference is   represented by the mean of her perception.
Certainly, the perception is precisely a distributional transform on $\mathcal M$.

We assume that the decision maker is described by her perception, instead of a preference relation as in the classic literature.
%
%In the following discussions,
% the decision maker is mathematically identified with the perception.
%This is   different from identifying the decision maker with a preference represented by the mean of her perception.
%To see the difference,
%the EU preference with utility function $g$ is represented by the mean of the perception $T^{[g]}$
%but also represented by the mean of the other perception $T: F\mapsto \delta_{x}$ where $\delta_x$ is the point-mass at $x=\int g (y)\d F(y)$, although the second way of representation is awkward to interpret.
%Thus, for a given preference,
%  the underlying perception may not be unique.
%  We focus on perceptions below.
We say that a decision maker is an EU thinker if her perception is a utility transform  in $\mathcal{U}$ (thus, the preference is EU); she is a DU thinker if her perception is  a probability distortion in $\mathcal D$ (thus, the preference is DU); she is an RDU thinker if her perception is  the composition of a utility transform in $\mathcal{U}$ with a strictly increasing and surjective UT and a probability distortion in $\mathcal D$ (thus, the preference is RDU in Theorem \ref{RDUM}).
The next proposition is immediate from Theorems \ref{PTtheorem1}-\ref{RDUM}.
\begin{proposition}\label{prop:discussion}
For a decision maker equipped with a perception,
\begin{enumerate}[(i)]
\item she is   a  DU thinker   if and only if
her perception commutes with  each EU perception;
\item she is  an EU thinker if and only if
her perception commutes with each DU perception;
\item she is   an RDU thinker  if and only if
her perception commutes with the set of EU perceptions.
\end{enumerate}
\end{proposition}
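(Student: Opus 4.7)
The plan is to derive each clause of the proposition as a direct translation of Theorems \ref{PTtheorem1}, \ref{PTtheorem2}, and \ref{RDUM}, unpacking the decision-theoretic terminology; no fresh mathematics is required.

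For (i), by definition the agent is a DU thinker exactly when her perception $T$ lies in $\mathcal{D}$, while ``commutes with each EU perception'' means $T\circ T^u = T^u\circ T$ for every $u\in\mathcal F_U$, i.e., $T$ commutes with every element of $\mathcal{U}$. Theorem \ref{PTtheorem1} asserts precisely the equivalence that $T\in\mathcal{D}$ if and only if $T$ commutes with each element of $\mathcal{U}$, so (i) is immediate. For (ii) the argument is the dual one: being an EU thinker means $T\in\mathcal{U}$, and ``commutes with each DU perception'' means commutation with every $T_d\in\mathcal{D}$. Theorem \ref{PTtheorem2} states exactly this equivalence, giving (ii).

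For (iii), I invoke Theorem \ref{RDUM}. The forward direction is straightforward: if $T = T_d\circ T^u$ with $d\in\mathcal F_D$ and $u\in\mathcal F_U$ strictly increasing and surjective, then both $T_d$ (as $d$ is increasing) and $T^u$ (as $u$ is increasing) preserve first-order stochastic dominance, so their composition $T$ is monotone; Theorem \ref{RDUM} then yields the set commutation $T\circ\mathcal{U} = \mathcal{U}\circ T$, which is the assertion that $T$ commutes with the set of EU perceptions. Conversely, if $T$ commutes with $\mathcal{U}$, applying Theorem \ref{RDUM} requires that $T$ be monotone; granting this, Theorem \ref{RDUM} produces a representation $T = T_d\circ T^u$ with $u$ strictly increasing and surjective, which is by definition an RDU thinker.

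The main (and really only) subtlety is the implicit monotonicity assumption required in (iii): Theorem \ref{RDUM} has monotonicity as a hypothesis, whereas the proposition does not state it explicitly. This is harmless in the decision-theoretic reading here, since a ``perception'' is naturally meant to respect first-order stochastic dominance; but to make the proof fully rigorous I would either add monotonicity to the setup of Section \ref{sec:6} or state it as a standing assumption in Proposition \ref{prop:discussion}(iii). Apart from this point, the proof is a pure dictionary lookup against the three main theorems.
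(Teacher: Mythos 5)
Your proposal is correct and matches the paper exactly: the paper gives no separate argument, stating only that the proposition is immediate from Theorems \ref{PTtheorem1}--\ref{RDUM}, which is precisely your dictionary-lookup reading of (i)--(iii). Your remark that part (iii) implicitly requires the perception to be monotone (the hypothesis of Theorem \ref{RDUM}) is a fair and accurate observation about the statement rather than a gap in your proof.
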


A decision maker with perception $T$ perceives a distribution $F$ as $G= T(F)$.
Commuting with EU perceptions mathematically means that she perceives $F\circ u^{-1}$ as $G\circ u^{-1}$ for all $u\in \mathcal F_U$. This means intuitively that her perception is independent of the metric or scale (e.g., linear, log, exponential, etc) used for the underlying random outcome.
On the other hand, commuting with DU perceptions mathematically means that she perceives $d \circ F  $ as $d\circ G $ for all $d\in \mathcal F_D$. In other words, her perception is independent of a probabilistic weighting.
These explanations are actually quite obvious:  a DU thinker applies a distortion $d$  to a (cumulative) probability $p$ regardless of its outcome value, and an EU thinker applies a utility $u$ to an outcome $x$ regardless of its probability.
Putting Proposition \ref{prop:discussion} into the above context, the independence principle above alone is able to characterize  both DU and EU thinkers, and a slight variation is able to characterize RDU thinkers.

As for future research, let $\mathcal{T}_0$ be the set of all  distributional transforms, i.e., mappings from $\M$ to $\M$. We may define $\mathcal{C}:2^{\mathcal T_0} \to 2^{\mathcal T_0} $ by $\mathcal{C}(\mathcal{T})=\bigcap_{T'\in \mathcal{T}}\{T\in \mathcal T_0:T\circ T'=T'\circ T\}$, that is,  the set of elements of $\mathcal{T}_0$ that commute with respect to each element of $\mathcal{T}$.  It is easy to see that $\mathcal{C}$ forms an (antitone) Galois connection with itself with respect to set inclusion on $2^{\mathcal{T}_0}$ (see e.g., \citet{Bly05}).  This implies that $\mathcal{C}^2:=\mathcal{C}\circ\mathcal{C}$ forms a closure operator, and the results in Theorems~\ref{PTtheorem1} and \ref{PTtheorem2} indicate that the sets $\mathcal{U}$ and $\mathcal{D}$ are closed sets according to this operator, with the further property that $\mathcal{U}=\mathcal{C}({\mathcal D})=\mathcal{C}^2({\mathcal U}) $ and $\mathcal{D}=\mathcal{C}(\mathcal{U})=\mathcal{C}^2({\mathcal D})$.  This latter property is a formal expression of the term ``duality''.  A general investigation of the map $\mathcal{C}$, together with its implied lattice of closed sets, could be fruitful.  
%Summarizing them in the following conjecture, we have a nicely interpretable   statement, which would follow directly from Conjecture \ref{conj:1}.
%
%\begin{conjecture}\label{conj:2}
%A decision maker is a DT decision maker if and only if
%her perception  is independent of the outcome scale, and
%a decision maker is an EU decision maker if and only if
%her perception is  independent of the probability scale.
%\end{conjecture}
%
%Conjecture \ref{conj:2} (if true) gives a new characterization of EU and DT.

%\subsection*{Acknowledgements}
%RW acknowledges financial support from the Natural Sciences and Engineering Research Council of Canada (RGPIN-2018-03823 and CRC-2022-00141).

\appendix

\appendix
\section{Proof of Theorem \ref{PTtheorem1}}\label{Sec:AA}
For $p\in (0,1]$, let $q_p:\M \to \R$ be the left quantile at probability level $p$, that is, $q_p(F)=F^{-1}_L(p)$.
Further, for $T: \M\to\M$, define $q^T_p: \M\to \R$ by $q^T_p(F)= q_p(T(F))$. For an increasing function $u$, let $u_R^{-1}(x)=\sup\{y\in\mathbb{R}: u(y)\leq x\}$ with $\sup\emptyset=-\infty$. For notational convenience, we extend the domain of $F\in\M$ to $[-\infty,\infty]$ by letting $F(-\infty)=0$ and $F(\infty)=1$.
%We characterize   probability distortions on $\M$.

\begin{proof}[Proof of Theorem \ref{PTtheorem1}]
  We first prove the   {``if"} part; that is,  we will show  that $T_d$ and $T^u$ commute.  For $d\in\mathcal F_D$, let
   $
T(F)(x)=T_d(F)(x)=(d\circ F)(x+),~x\in\mathbb{R}$. Note that for any $F\in\M$ and $u\in\mathcal F_U$,
 $$T_d\circ T^u(F)(x)=[d\circ(F\circ u_R^{-1})](x+)=\lim_{y\downarrow x}d(F(u_R^{-1}(y))),$$
 and $$T^u\circ T_d (F)(x)=w(u_R^{-1}(x)), $$
 where $w(x)=(d \circ F)(x+)=\lim_{y\downarrow x}d(F(y))$ for $x\in\mathbb R$ and $w(+\infty)=1, w(-\infty)=0.$
 Moreover, as $u\in\mathcal F_U$,  if $u_R^{-1}(x)\in\mathbb{R}$, then $u_R^{-1}(y)>u_R^{-1}(x)$ for $y>x$. Hence,
 $\lim_{y\downarrow x}d(F(u_R^{-1}(y)))=w(u_R^{-1}(x))$, implying $T_d\circ T^u(F)(x)=T^u\circ T_d (F)(x).$
 For $u_R^{-1}(x)=\infty$, by definition, we have
 $T_d\circ T^u(F)(x)=T^u\circ T_d (F)(x)=1.$
 For $u_R^{-1}(x)=-\infty$, we have $\lim_{y\downarrow x}u_R^{-1}(y)=-\infty$, which implies
 $T_d\circ T^u(F)(x)=T^u\circ T_d (F)(x)=d(0)=0.$
 Consequently, $T_d\circ T^u=T^u\circ T_d$.

 Next, we show the  {``only if"} part. For each $p\in (0,1)$, we claim that the functional $q^T_p: \M \to \R$ satisfies
$q^T_p(T^u(F)) = u(q^T_p(F))$ for all $u\in\mathcal F_U$.
%\item   for all sequences $\{F_n\}_{n\in \N}\subseteq \mathcal M$ which weakly converges to $F\in \mathcal M$ satisfying
 %$q^T_p(F_n)\leq x\in \R$, we have  $q^T_p(F)\leq x$.
 Using the commutation of $T$ and $T^u$ for $u\in\mathcal F_U$ and the fact that $q_p$ commutes with $T^u$ for $u\in\mathcal F_U$, we have
  $$
 q^T_p(T^u(F)) = q_p (T \circ T^u (F)) = q_p ( T^u\circ T   (F)) = u(q_p( T   (F))) = u(q^T_p(F)).
 $$
 By Theorem 2 of \cite{FLW22},  $q^T_p$ is a (left or right) quantile of $F$  at a fixed level for all $F\in\M$.  We denote this level by $g(p)\in [0,1]$. % If $d(p)=1,$ then $q_p^T(F)=F_L^{-1}(d(p))$; if $d(p)=0,$ then $q_p^T(F)=F_R^{-1}(d(p))$; if   $d(p)\in (0,1)$, then
% Hence, $$q_p^{T}(F)=F_L^{-1}(d(p))~\text{or}~ q_p^{T}(F)=F_R^{-1}(d(p)).$$
 Hence there exists $E\subseteq (0,1)$ such that for any $F\in\M$
 \begin{equation}\label{eq:quantile}q_p^T(F)=\left\{\begin{array}{cc}
   F_L^{-1}(g(p)),& ~p\in E\\
   F_R^{-1}(g(p)),&~p\in (0,1)\setminus E
   \end{array}\right..
   \end{equation}
 %Let $E_1=\{d(p): p\in (0,1)\}$. If $1\in E_1$, then $1\in E$; If $1\notin E_1$, we can assume that $1\in E$, which does not affect (\ref{eq:quantile}). Analogously, we can assume that $0\notin E$.  Now  $1\in E$ and $0\notin E$}.
   Note that by definition, $p\notin E$ if $g(p)=0$ and $p\in E$ if $g(p)=1$ . Clearly,   $g(p)$ is an increasing function over $(0,1)$.
By (\ref{eq:quantile}), we have
\begin{align}\label{TF}
T(F)(x)&=\lambda\left(\{p\in E: F_L^{-1}(g(p))\leq x\}\cup\{p\in (0,1)\setminus E: F_R^{-1}(g(p))\leq x\}\right)\nonumber\\
&=\lambda\left(\{p\in E: g(p)\leq F(x)\}\cup\{p\in (0,1)\setminus E: F_R^{-1}(g(p))\leq x\}\right), ~x\in\mathbb{R},
\end{align}
where $\lambda$ is the Lebesgue measure on $[0,1]$. Note that for $F(x)\in (0,1)$
\begin{align*}
\{p\in (0,1)\setminus E: F_R^{-1}(g(p))\leq x\}=\left\{\begin{array}{cc}
\{p\in (0,1)\setminus E: g(p)< F(x)\},& x<F_R^{-1}(F(x))\\
\{p\in (0,1)\setminus E: g(p)\leq F(x)\},& x=F_R^{-1}(F(x)).
\end{array}\right.
\end{align*}
 Hence
\begin{align}\label{eq:g}
d(F(x)-)\leq T(F)(x)\leq d(F(x)),~x\in (F_R^{-1}(0), F_L^{-1}(1)),
\end{align}
where $d(x)=\lambda(\{p\in (0,1): g(p)\leq x\})$ is a right-continuous and increasing function with  $d(1)=1$. Note that $d(0)\in [0,1]$.
Moreover, by (\ref{TF}) we have  \begin{align}\label{TF1}T(F)(x)=\left\{\begin{array}{cc}
0,& x<F_R^{-1}(0)\\
d(1),& x\geq F_L^{-1}(1),
\end{array}\right.
\end{align}
and
 for  $x=F_R^{-1}(0)$,
$T(F)(x)=d(0)$ if $F(F_R^{-1}(0))=0$;
$d(F(x)-)\leq T(F)(x)\leq d(F(x))$ if $F(F_R^{-1}(0))>0$.
Let $D$ be the set of all discontinuous points of $d$ on $(0,1)$. For $x\in D$, let
$$r(x):=\lambda\left(\{p\in E: g(p)\leq x\}\cup \{p\in (0,1)\setminus E: g(p)<x\}\right).$$
Noting that $d(x-)\leq r(x)\leq d(x),~x\in D$, it follows that the function $\widehat d$ given by
\begin{align*}
\widehat{d}(x)=\left\{\begin{array}{cc}
d(x),& x\in (0,1]\setminus D\\
r(x),& x\in D\\
0,& x=0
\end{array}
\right.
\end{align*}
is an increasing function on $[0,1]$ with $\widehat{d}(0)=0$ and $\widehat{d}(1)=1$.
It follows from (\ref{eq:g}) that
\begin{align*}
T(F)(x)=\widehat{d}(F(x)),~F(x)\in (0,1)\setminus D.
\end{align*}
 Moreover, by (\ref{TF1}), we have
$$ T(F)(x)=\widehat{d}(F(x)),~x\in (-\infty, F_R^{-1}(0))\cup [F_L^{-1}(1),\infty).$$
For  $F(x)=c\in D$, if $x<F_R^{-1}(F(x))$,
\begin{align*}
T(F)(x)=\lambda\left(\{p\in E: g(p)\leq c\}\cup \{p\in (0,1)\setminus E: g(p)<c\}\right)=r(c)=\widehat{d}(F(x)).
\end{align*}
For  $F(x)=c\in D$ and $x=F_R^{-1}(F(x))$, we have $T(F)(x)=d(c)$, which may  not be equal to $\widehat{d}(F(x))$.
Hence,
  $
  T(F)(x)\neq \widehat{d}(F(x))
$ holds  over $(F_R^{-1}(0),F_L^{-1}(1))$ only if $F(x)\in D$ and $x=F_R^{-1}(F(x))$.
This implies that $T(F)(x)=\widehat{d}(F(x))$ does not hold over $(F_R^{-1}(0),F_L^{-1}(1))$ only at countable number of points. Hence, by right continuity of the two functions, we have
$$T(F)(x)=(\widehat{d}\circ F)(x+),~x\in (F_R^{-1}(0),F_L^{-1}(1)).$$
Moreover, by   right continuity of $T(F)(x)$ and $(\widehat{d}\circ F)(x+)$, we have
$$T(F)(F_R^{-1}(0))=(\widehat{d}\circ F)(x+)|_{x=F_R^{-1}(0)}.$$
Combing the above results, we conclude   $T(F)(x)=(\widehat{d}\circ F)(x+)$ for all $x\in \mathbb{R}.$
This completes the proof of Theorem \ref{PTtheorem1}.
\end{proof}
  \section{Proof of Theorem \ref{PTtheorem2}}

\begin{proof}[Proof of Theorem \ref{PTtheorem2}] The ``if" part is the same as Theorem \ref{PTtheorem1}.   We  focus on the ``only if" part.
 For $n\geq 1$,  denote
$\mathcal{C}_n=\{F\in \M: F(-n)=0,~F(n)=1\},$
and let $F_{U_n}$ represent the uniform distribution on $[-n,n]$. Clearly,
$$\mathcal{C}_n=\{T_d(F_{U_n}): d~\text{is right-continuous and }~d\in \mathcal F_D\},~\mathcal{C}_n\subseteq\mathcal{C}_{n+1}, n\geq 1, ~\text{and}~\M=\bigcup_{n=1}^\infty\mathcal{C}_n.$$
 By $T\circ T_d=T_d\circ T$, it follows that for $p\in (0,1)$ and $F=T_d(F_{U_n})$ with  a right-continuous $d\in \mathcal F_D$
$$q_p^T(F)=q_p^T(T_d(F_{U_n}))=q_p^{T_d}(T(F_{U_n}))=q_{d_L^{-1}(p)}^T(F_{U_n}).$$
Note that $d_L^{-1}(p)=( {F_L^{-1}(p)+n})/({2n}),~p\in (0,1)$ and $d_L^{-1}(p)>0$ for $p\in (0,1)$. Define $u_n:(-n,n]\to\mathbb{R}$ by $u_n(t)=q_{(t+n)/2n}^T(F_{U_n})$.
Then, for $F\in \mathcal{C}_n$,
$$q_p^T(F)=u_n(F_L^{-1}(p)),~p\in (0,1).$$
Setting $F=F_{U_n}$, it follows that for $n\geq 1$ and $m\geq 0$,
$$q_p^T(F_{U_n})=u_n(-n+2np)=u_{n+m}(-n+2np),~p\in (0,1).$$
Hence, for $n\geq 1$ and $m\geq 0$, we have $u_{n+m}(t)=u_n(t),~t\in (-n,n)$. Moreover, one can verify   $u_{n+m}(n)=u_n(n)$.  Therefore, we can define a function $u:\mathbb{R}\to\mathbb{R}$ by letting $u(t)=u_n(t),~t\in (-n,n]$. Clearly, $u$ is an increasing and left-continuous function. Moreover, we have for $F\in\M$
\begin{align}\label{Eq:utility}q_p^T(F)=u(F_L^{-1}(p)),~p\in (0,1). \end{align}
This implies that for any $F\in\M$, $T(F)=T^{u}(F)$.

We next  show   continuity of $u$. We assume by contradiction that there exists $x_0\in\mathbb{R}$ such that $u(x_0+)>u(x_0)$. Denote $y_0=u(x_0)$ and without loss of generality, we assume $x_0\in (0,1)$. Let $d(x)=\id_{(x_0,1]}(x),~x\in [0,1]$ and $F_U$ be the uniform distribution on $[0,1]$.  Note that $d\in\mathcal F_D$ and
$(d \circ F_U)(x+)=\id_{[x_0,\infty)}(x)$. Hence
$$T^{u}\circ T_d(F_U)(y_0)=\id_{[x_0,\infty)}(u_R^{-1}(y_0))=\id_{[x_0,\infty)}(x_0)=1,$$
and
$$T_d\circ T^{u}(F_U)(y_0)=\lim_{y\downarrow y_0}d(F_U(u_R^{-1}(y)))=d(F_U(x_0))=d(x_0)=0.$$
This implies that $T^{u}\circ T_d\neq T^{u}\circ T_d$ for some $d\in \mathcal F_D$, yielding a contradiction.
Hence $u$ is continuous.
 This completes the proof.
\end{proof}
\section{Proof of Theorem \ref{RDUM}}\label{Sec:AC}
In order to prove Theorem \ref{RDUM}, we need  an additional result for mappings $\rho:\M\to\R$.   We say that $\rho:\M\to\R$ commutes with the set $\mathcal U$ if $\mathcal F_U\circ \rho=\rho\circ\mathcal U$, where $\mathcal F_U\circ \rho=\{u\circ \rho: u\in\mathcal F_U\}$ and $\rho\circ\mathcal U=\{\rho\circ T: T\in \mathcal U\}$.   For a mapping $\rho:\M\to\R$, we say $\rho$ is  a \emph{left quantile} if there exists some $p\in (0,1]$ such that $\rho(F)=F_L^{-1}(p)$ for all $F\in\M$; $\rho$ is a \emph{right quantile} if there exists some $p\in [0,1)$ such that $\rho(F)=F_R^{-1}(p)$ for all $F\in\M$; $\rho$ is a \emph{quantile} if $\rho$ is either a left or  right quantile.
 \begin{proposition}\label{mappings} For  a mapping $\rho:\M\to\R$,  $\rho$ commutes with $\mathcal U$ if and only if there exists a strictly monotone, continuous and surjective $h$ such that $\rho=h\circ \widehat{\rho}$, where $\widehat{\rho}:\M\to\R$ is a quantile.
 \end{proposition}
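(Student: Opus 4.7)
The plan is to split into the two directions, with the bulk of the work on the ``only if'' side. The ``if'' direction is routine: given $\rho=h\circ\widehat{\rho}$ with $\widehat{\rho}$ a quantile and $h:\R\to\R$ a strictly monotone continuous surjection, the quantile ordinality $\widehat{\rho}\circ T^u=u\circ\widehat{\rho}$ (which holds for every $u\in\mathcal F_U$) yields $\rho\circ T^u=(h\circ u\circ h^{-1})\circ\rho$ and, by substituting $u\mapsto h^{-1}\circ u\circ h$, also $u\circ\rho=\rho\circ T^{h^{-1}\circ u\circ h}$. Because $h$ and $h^{-1}$ are both continuous and have the same direction of monotonicity, the conjugates $h\circ u\circ h^{-1}$ and $h^{-1}\circ u\circ h$ both lie in $\mathcal F_U$, giving $\mathcal F_U\circ\rho=\rho\circ\mathcal U$.

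For the ``only if'' direction, I would read off the candidate $h$ from point masses by setting $h(x):=\rho(\delta_x)$. Specialising the set commutation $\mathcal F_U\circ\rho=\rho\circ\mathcal U$ at $F=\delta_x$ yields two auxiliary conjugation identities on $\R$: for every $u\in\mathcal F_U$ there is $w_u\in\mathcal F_U$ with $w_u\circ h=h\circ u$, and for every $w\in\mathcal F_U$ there is $u_w\in\mathcal F_U$ with $w\circ h=h\circ u_w$. Three properties of $h$ then need to be extracted. Surjectivity follows from the second identity applied to the translations $w(y)=y+c$, which forces the range of $h$ to be shift-invariant, hence equal to $\R$. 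For injectivity, if $h(a)=h(b)$ with $a<b$, for any prescribed $c\in\R$ I would build $u\in\mathcal F_U$ that fixes one of $a,b$ and sends the other to $c$; evaluating $w_u\circ h=h\circ u$ at both endpoints forces $h(c)=h(a)$, so varying $c$ would make $h$ constant, contradicting surjectivity.

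The hard step, and the principal obstacle, is to show that the bijection $h$ is strictly monotone; continuity will then follow automatically, since a strictly monotone surjection $\R\to\R$ is necessarily continuous. I plan to argue by contradiction. If $h$ is not monotone, there exist $x_1<x_2<x_3$ with $h(x_2)$ lying outside the interval spanned by $h(x_1)$ and $h(x_3)$; up to symmetry, consider the case $h(x_2)>\max\{h(x_1),h(x_3)\}$ with $h(x_1)<h(x_3)$. For every strictly increasing continuous $w\in\mathcal F_U$, the values $y_1=w(h(x_1))$, $y_2=w(h(x_3))$, $y_3=w(h(x_2))$ satisfy $y_1<y_2<y_3$, and monotonicity of $u_w=h^{-1}\circ w\circ h$ translates precisely into $h^{-1}(y_1)\le h^{-1}(y_3)\le h^{-1}(y_2)$. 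Since $w$ can be chosen freely to interpolate $h(x_1),h(x_3),h(x_2)$ to any prescribed $y_1<y_2<y_3$, this nested inequality must hold for every such triple; comparing two triples that share a common pair of coordinates, for instance $(0,1,100)$ and $(1,50,100)$, yields the two opposite inequalities $h^{-1}(100)\le h^{-1}(1)$ and $h^{-1}(1)\le h^{-1}(100)$, forcing $h^{-1}(1)=h^{-1}(100)$ and contradicting injectivity of $h^{-1}$.

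Once $h$ is established as a strictly monotone continuous bijection of $\R$, conjugating $w_u\circ h=h\circ u$ by $h^{-1}$ on $\R$ gives $h^{-1}\circ w_u=u\circ h^{-1}$. Defining $\widehat{\rho}:=h^{-1}\circ\rho$, one then computes $\widehat{\rho}(T^u(F))=h^{-1}(w_u(\rho(F)))=u(h^{-1}(\rho(F)))=u(\widehat{\rho}(F))$ for every $u\in\mathcal F_U$ and every $F\in\M$. Theorem~2 of \cite{FLW22} then identifies $\widehat{\rho}$ as a left or right quantile, completing the decomposition $\rho=h\circ\widehat{\rho}$.
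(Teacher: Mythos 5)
Your proposal is correct and follows essentially the same route as the paper: read $h(x)=\rho(\delta_x)$ off point masses, establish from the two conjugation identities that $h$ is a strictly monotone continuous surjection, then set $\widehat{\rho}=h^{-1}\circ\rho$ and invoke Theorem 2 of \cite{FLW22} to identify $\widehat{\rho}$ as a quantile (the ``if'' direction by conjugating $u$ with $h$ is likewise identical). The only differences are at the level of verifying the regularity of $h$ --- you obtain surjectivity directly from translation invariance of the range, prove injectivity before monotonicity, and get continuity for free from strict monotonicity plus surjectivity, whereas the paper argues unboundedness, excludes flat parts, and uses a separate translation argument for continuity --- and both variants are valid.
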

 \begin{proof} We first consider the ``if" part. For $u\in\mathcal F_U$, let $\psi=h\circ u\circ h^{-1}$. Note that $\psi\in\mathcal F_U$. Using the commutation of utility transforms and quantiles (Theorem 2 of \cite{FLW22}), we have $$\psi\circ\rho=\psi\circ h\circ \widehat{\rho}=h\circ u\circ \widehat{\rho}=h\circ\widehat{\rho}\circ T^u=\rho\circ T^u.$$ Let $\omega=h^{-1}\circ u\circ h$. Clearly, $\omega\in\mathcal F_U$. Analogously as above, we have
$$\rho\circ T^{\omega}=h\circ \widehat{\rho}\circ T^{\omega}=h\circ \omega\circ \widehat{\rho}=u\circ h\circ \widehat{\rho}=u\circ\rho.$$
Note that in the above equations, we only use the commutation of $\widehat{\rho}$ and increasing transforms.
Hence $\rho$ commutes with $\mathcal U$.

  We next show the ``only if" part. By $\mathcal U$-commutation, we have for any $u\in\mathcal F_U$, there exists $\psi, \omega\in\mathcal F_U$ such that $\psi\circ \rho(\delta_x)=\rho\circ T^u(\delta_x)$ and $\rho\circ T^{\omega}(\delta_x)=u\circ\rho(\delta_x)$. We denote $h(x)=\rho(\delta_x),~x\in\R$. Then direct computation yields $\psi\circ h=h\circ u$ and $h\circ\omega=u\circ h$. This implies  $h$ is continuous and strictly monotone and $h(\R)=\R$. The proof is given below.

 We assume by contradiction that $h(\R)$ has a lower or upper bound, i.e.,  $h(\R)\subseteq [n_1,\infty]$ for some $n_1\in \R$, or $h(\R)\subseteq (-\infty,n_2]$  for some $n_2\in \R$.  Without loss of generality, we consider the case $h(\R)\subseteq (-\infty,n_2]$ for $n_2\in\R$. Let $u\in\mathcal F_U$  be such that $u(h(0))>n_2+1$. Note that for any $\omega\in\mathcal F_U$, $h\circ\omega(0)\leq n_2<u\circ h(0)$. This means that $h\circ\omega=u\circ h$ does not hold for all $\omega\in\mathcal F_U$, leading to a contradiction. Hence $h(\R)$ is neither bounded from below nor from above. If $h$ is not monotone,  then there exist $a, b, c\in\R$  with $a<b<c$ such that $h(a)<h(b)$ and $h(b)>h(c)$, or  $h(a)>h(b)$ and $h(b)<h(c)$. For the former case, let $u\in\mathcal F_U$ be such that $u(a)=b$ and $u(b)=c$. It follows that $h\circ u(a)>h\circ u(b)$. Moreover,  by the fact that there exists $\psi\in\mathcal F_U$ such that   $\psi\circ h=h\circ u$, we have $\psi\circ h(a)>\psi\circ h(b)$, which contradicts with $h(a)<h(b)$ and $\psi\in\mathcal F_U$. The latter case is not possible by the same reasoning. Hence,  $h$ is a monotone function.  Next, we assume $h$ is flat over $[a,b]$ with $a,b\in\R$ and $a<b$. As $h(\R)$ is not a singleton, there exist $c,d\in\R$ with $c<d$ such that $h(c)\neq h(d)$. Let $u\in\mathcal F_U$ be such that $u(a)=c$ and $u(b)=d$. Hence we have
$h\circ u(a)\neq h\circ u(b)$. In contrast, $\psi\circ h(a)=\psi\circ h(b)$ for all $\psi\in\mathcal F_U$, leading to a contradiction.  Thus, $h$ is strictly monotone. Next, assume by contradiction that $h$ is not continuous on $\R$. Then there exist $x_0, x_1\in\R$ such that $h$ is not continuous at $x_0$ but  continuous at $x_1$. Let $u:x\mapsto x-x_1+x_0,~x\in\R$. It follows that $u\in\mathcal F_U$ and $h\circ u$ is not continuous at $x_1$. However, $\psi\circ h$ is continuous at $x_1$ for all $\psi\in\mathcal F_U$, which contradicts with the fact that there exists  $\psi\in\mathcal F_U$ such that  $\psi\circ h=h\circ u$. Hence $h$ is continuous on $\R$. Combining all the properties proved above, we conclude that $h$ is  continuous, strictly monotone and satisfying  $h(\R)=\R$.

Using the above conclusion, $\psi\circ h=h\circ u$ implies $\psi=h\circ u\circ h^{-1}$. Hence, $$\psi\circ\rho=h\circ u\circ h^{-1}\circ\rho=\rho\circ T^u,$$ which further implies $u\circ h^{-1}\circ\rho=h^{-1}\circ\rho\circ T^u$. Let $\widehat{\rho}=h^{-1}\circ\rho$. It follows that $\widehat{\rho}:\M\to\R$ and $u\circ \widehat{\rho}=\widehat{\rho}\circ T^u$ for all $u\in\mathcal F_U$. Using Theorem 2 of \cite{FLW22}, we obtain that $\widehat{\rho}$ is a quantile. We further have $\rho=h\circ \widehat{\rho}$.
 \end{proof}

The result in Proposition \ref{mappings} leads to the proof of Theorem \ref{RDUM}.

\begin{proof}[Proof of Theorem \ref{RDUM}]
The ``if" part follows by  checking the definition. Note that $T=T_d\circ T^{h}$ for $d\in\mathcal F_D $ and $h\in\mathcal F_U^\diamond$. For $u\in\mathcal F_U$, let $\psi=h\circ u\circ h^{-1}\in\mathcal F_U$.
It follows from  Theorem \ref{PTtheorem1}  that
$$T^{\psi}\circ T=T^{\psi}\circ T_d\circ T^{h}=T_d\circ T^{\psi}\circ T^{h}=T_d\circ T^{\psi\circ h}=T_d\circ T^{h\circ u}=T\circ T^u.$$
Moreover, let $\omega=h^{-1}\circ u\circ h\in\mathcal F_U$. Analogously, we have
$$T\circ T^{\omega}=T_d\circ T^{h}\circ T^{\omega}=T_d\circ T^{h\circ \omega}=T_d\circ T^{u\circ h}=T^u\circ (T_d\circ T^{h})= T^u\circ T.$$
Hence $T$ commutes with $\mathcal U$.

We next focus on the ``only if" part. Recall that for $p\in (0,1)$, $q^T_p: \M\to \R$ is given by $q^T_p(F)=(T(F))_L^{-1}(p)$. Note that for any $u\in\mathcal F_U$, there exist $\psi, \omega\in\mathcal F_U$ such that $T^{\psi}\circ T=T\circ T^u$ and $T\circ T^{\omega}=T^u\circ T$. This implies that
$\psi\circ q^T_p=q^T_p\circ T^u$ and $q^T_p\circ T^{\omega}=u\circ q^T_p$ for all $p\in (0,1)$. Hence,
 $q^T_p$ commutes with $\mathcal U$ for all $p\in (0,1)$. By Proposition \ref{mappings} and  monotonicity of $T$, for $p\in (0,1)$, there exists $h_p\in \mathcal F_U^\diamond$ such that $q^T_p=h_p\circ \widehat{\rho}_p$, where $\widehat{\rho}_p$ is a quantile. Note that $\psi\circ q^T_p(\delta_x)=q^T_p\circ T^u(\delta_x)$ for all $x\in\R$ and $p\in (0,1)$. Hence we have
 $\psi\circ h_p=h_p\circ u$ for all $p\in (0,1)$. A simple manipulation yields that $h_{p_1}\circ u\circ h_{p_1}^{-1}=h_{p_2}\circ u\circ h_{p_2}^{-1}$ holds for all $u\in\mathcal F_U$ and $p_1, p_2\in (0,1)$, which is equivalent to $h_{p_2}^{-1}\circ h_{p_1}\circ u=u\circ h_{p_2}^{-1}\circ h_{p_1}$ for all $u\in\mathcal F_U$ and $p_1, p_2\in (0,1)$.  Denote   $f=h_{p_2}^{-1}\circ h_{p_1}$.
 Then $f\in\mathcal F_U^\diamond$ and $f\circ u=u\circ f$ for all $u\in\mathcal F_U$. Let $u(x)=\min(x, b),~x\in\R$ for some $b\in\R$. It follows that for $x\geq b$, $f\circ u(x)=f(b)$ and $u\circ f(x)=\min(f(x), b)$. Hence $f(b)=\min(f(x), b)$ for $x\geq b$. Letting $x\to\infty$, we have $f(b)=b$ for $b\in\R$, which means that $f$ is the identity function. Hence $h_{p_1}=h_{p_2}$ for all $p_1, p_2\in (0,1)$. Let $h=h_{1/2}$. Hence, we have for all $p\in (0,1)$, $q_p^T=h\circ \widehat{\rho}_p$ with $h\in \mathcal F_U^\diamond$.

 Note that $q_p^T(\delta_x)=h\circ \widehat{\rho}_p(\delta_x)=h(x)$ for all $p\in (0,1)$. Therefore, $T(\delta_x)=\delta_{h(x)} $ for $x\in\R$. Using the fact that $T$ commutes with $\mathcal U$, we have $T^{\psi}\circ T(\delta_x)=T\circ T^u(\delta_x)$, implying $\psi\circ h=h\circ u$, and yielding $\psi=h\circ u \circ h^{-1}$. It follows from $T^{\psi}\circ T=T\circ T^u$ that
$T^{h}\circ T^u\circ T^{h^{-1}}\circ T=T\circ T^u$, implying
  $T^u\circ (T^{h^{-1}}\circ T)=(T^{h^{-1}}\circ T)\circ T^u$. Letting $\widehat{T}=T^{h^{-1}}\circ T$, we have $\widehat{T}$ commutes with $T^u$ for all $u\in\mathcal F_U$. In light of Theorem \ref{PTtheorem1}, we conclude that there exists $d\in\mathcal F_D$ such that $\widehat{T}=T_d$. Hence $T=T^{h}\circ T_d=T_d\circ T^{h}$.\end{proof}
  %\section{Proof of Theorem \ref{PTtheorem:4}}
  \section{Proof of Theorem \ref{RDUM2}}
  \label{app:D}
  \begin{proof}[Proof of Theorem \ref{RDUM2}]
We first show the ``if part". Suppose there exist $u\in\mathcal F_U^L$ and $d\in\mathcal F_D^R$ such that $T=T_d\circ T^u$, where $d$ is strictly increasing and continuous.
Note that for $d\in \mathcal F_D^R$, $T_d(F)(x)=d(F(x)),~x\in\mathbb{R}$. Hence, for $d\in \mathcal F_D^R$ and $u\in \mathcal F_U^L$
$$T_d\circ T^u(F)(x)=d(F(u_R^{-1}(x)))=T^u\circ T_d(F)(x),~x\in \mathbb{R},$$
indicating that $T^u$ and $T_d$ commute.
For any $T_{d_1}\in \mathcal D^R$, let $d_2=d\circ d_1\circ d^{-1}$, where $d^{-1}$ is the inverse function of $d$. Clearly, $d_2\in\mathcal F_D^R$.
Therefore, we have
$$T_{d_2}\circ T=T_{d_2}\circ T_d\circ T^u=T_{d_2\circ d}\circ T^u=T_{d\circ d_1}\circ T^u=T_{d}\circ T_{d_1}\circ T^u=T_{d}\circ T^u \circ T_{d_1}=T\circ T_{d_1}.$$
Moreover, by letting $d_2=d^{-1}\circ d_1\circ d\in\mathcal F_D^R$, we have
$$T\circ T_{d_2}=T_d\circ T^u\circ T_{d_2}=T_d\circ T_{d_2}\circ T^u=T_{d\circ d_2}\circ T^u=T_{d_1\circ d}\circ T^u=T_{d_1}\circ T_d\circ T^u=T_{d_1}\circ T.$$

We next show the ``only if" part.  Let $B_{y,z}^{\alpha}=\alpha\delta_y+(1-\alpha)\delta_z$ with $y<z$ and $\alpha\in [0,1]$. We   consider two different scenarios.

First, we suppose there exist $y_0<z_0$, $x_0\in\R$ and $\alpha_0\in (0,1)$ such that $T(B_{y_0,z_0}^{\alpha_0})(x_0)\in (0,1)$. Let $g(\alpha)=T(B_{y_0,z_0}^{\alpha})(x_0),~\alpha\in [0,1]$. Note that monotonicity of $T$ implies that $g$ is an increasing function. Moreover, commutation with $\mathcal D^R$ means that   for any $d_1\in\mathcal F_D^R$, there exist $d_2, d_3\in\mathcal F_D^R$ such that
$T_{d_2}\circ T=T\circ T_{d_1}$ and $T\circ T_{d_3}=T_{d_1}\circ T$.
By the second equality, we have $T\circ T_{d_3}(B_{y_0,z_0}^{\alpha})(x_0)=T_{d_1}\circ T(B_{y_0,z_0}^{\alpha})(x_0)$, which can be rewritten as $g\circ d_3=d_1\circ g$. Using the fact that $g(\alpha_0)\in (0,1)$, we have
$[0,1]=\{d_1(g(\alpha_0)): d_1\in\mathcal F_D^R\}\subseteq \{g(\alpha):\alpha\in [0,1]\}$. It follows from the fact  $g(\alpha)\in [0,1]$ for $\alpha\in [0,1]$ that $\{g(\alpha):\alpha\in [0,1]\}=[0,1]$. Hence $g\in\mathcal F_D^R$ is continuous.

By the first equality, we have $T_{d_2}\circ T(B_{y_0,z_0}^{\alpha})(x_0)=T\circ T_{d_1}(B_{y_0,z_0}^{\alpha})(x_0)$, which can be simplified as $d_2\circ g=g\circ d_1$. We next show that $g$ is strictly increasing. Suppose by contradiction that there exist $0<  \alpha_1<\alpha_2<\alpha_3<1$ such that $g(\alpha_1)=g(\alpha_2)<g(\alpha_3)$ or $g(\alpha_1)<g(\alpha_2)=g(\alpha_3)$. For the former case, let $d_1\in\mathcal F_D$ satisfy $d_1(\alpha_1)=\alpha_2$ and $d_1(\alpha_2)=\alpha_3$. This implies $g\circ d_1(\alpha_1)\neq g\circ d_1(\alpha_2)$. For any $d_2\in\mathcal F_D^R$, we have $d_2\circ g(\alpha_1)=d_2\circ g(\alpha_2)$, which is contradicted by the existence of $d_2\in\mathcal F_D^R$ such that $d_2\circ g=g\circ d_1$.  For the latter case, we can   show a similar contradiction. Hence $g$ is strictly increasing. Therefore, we conclude that $g\in\mathcal F_D^R$ is strictly increasing and continuous.

By $d_2\circ g=g\circ d_1$, we have $d_2=g\circ d_1\circ g^{-1}$, where $g^{-1}$ is the inverse function of $g$. It follows from $T_{d_2}\circ T=T\circ T_{d_1}$ that $T_{g\circ d_1\circ g^{-1}}\circ T=T_g\circ T_{d_1}\circ T_{g^{-1}}\circ T=T\circ T_{d_1}$. We denote $\widehat{T}=T_{g^{-1}}\circ T$.  It follows that $T_{d_1}\circ \widehat{T}=\widehat{T}\circ T_{d_1}$ for all $T_{d_1}\in \mathcal D^R$. This implies \eqref{Eq:utility} in the proof of Theorem \ref{PTtheorem2}. Hence, we have $\widehat{T}=T^u$ for some $u\in\mathcal F_U^L$, which further implies $T=T_g\circ T^u$ for some $u\in\mathcal F_U^L$ and some strictly increasing and continuous $g\in\mathcal F_D^R$.

Next, we consider the case that  $T(B_{y,z}^{\alpha})(x)\in\{0,1\}$ for all $y<z$, $x\in\R$ and $\alpha\in (0,1)$. Using
$T\circ T_{d_3}(\delta_y)=T_{d_1}\circ T(\delta_y)$, we have $T(\delta_y)=T_{d_1}\circ T(\delta_y)$ for any $d_1\in\mathcal F_D^R$. This implies $T(\delta_y)=\delta_{u(y)}$ for some function $u$. Hence for $\alpha=0, 1$, $T(B_{y,z}^{\alpha})(x)\in\{0,1\}$ also holds.
 We next show that for fixed $x,y,z$, $T(B_{y,z}^{\alpha})(x)=0$ for all $\alpha\in [0,1]$ or $T(B_{y,z}^{\alpha})(x)=1$ for all $\alpha\in [0,1]$. Suppose by contradiction that there exist $0<\alpha_1<\alpha_2\leq 1$ such that $T(B_{y,z}^{\alpha_1})(x)=0$ and $T(B_{y,z}^{\alpha_2})(x)=1$. Let $d_1\in\mathcal F_D^R$ such that $d_1(\alpha_1)=\alpha_2$. Then we have $T\circ T_{d_1}(B_{y,z}^{\alpha_1})(x)=T(B_{y,z}^{\alpha_2})(x)=1$.
However, for any $d_2\in\mathcal F_D^R$, $T_{d_2}\circ T(B_{y,z}^{\alpha_1})(x)=0$, which contradicts the existence of $d_2\in\mathcal F_D^R$ such that $T_{d_2}\circ T=T\circ T_{d_1}$. For $0\leq \alpha_1<\alpha_2<1$ such that $T(B_{y,z}^{\alpha_1})(x)=0$ and $T(B_{y,z}^{\alpha_2})(x)=1$, we can similarly show a contradiction. Hence, $T(B_{y,z}^{\alpha})(x)=0$ for all $\alpha\in [0,1]$ or $T(B_{y,z}^{\alpha})(x)=1$ for all $\alpha\in [0,1]$. This implies $T(B_{y,z}^{0})=T(B_{y,z}^{1})$ for all $y<z$, i.e., $T(\delta_y)=T(\delta_z)$ for all $y<z$.  Using the fact $T(\delta_y)=\delta_{u(y)}$, we have $T(\delta_y)=\delta_{c}$ for all $y\in\R$ and some $c\in\R$, which
together with   monotonicity of $T$ implies   $T=T^c$. Hence, we have $T=T_d\circ T^c$, where $d\in\mathcal F_D^R$ is strictly increasing and continuous. This completes the proof.
\end{proof}
\section{General sets of utility and distortion functions}
\label{app:general}
In this appendix, we study   probability distortions and utility transforms on different sets of UF and DF from the ones in Theorems \ref{PTtheorem1}, \ref{PTtheorem2} and \ref{RDUM}.
We use the notation introduced in Section \ref{Sec:RDU}.
The main intuition is that, for the commutation property, requiring more continuity in the distortion functions results in less continuity in the utility functions,
 and the same holds true if the positions of distortion functions and utility functions are switched. Therefore, the next result, which is a different version of Theorems \ref{PTtheorem1} and \ref{PTtheorem2}, further illustrates the symmetry between utility transforms and probability distortions.

%Let $\mathcal F_U^L$ be the set of all increasing and  left-continuous functions on $\R$ and let $\mathcal F_D^R$ be the set of all right-continuous functions  in $\mathcal F_D$. Note that  $\mathcal F_U\subseteq\mathcal F_U^L$ and $\mathcal F_D^R\subseteq\mathcal F_D$. The utility transform can be naturally extended to $\mathcal F_U^L$. We denote $\mathcal{U}^L=\{T^{u}:u \in \mathcal F_U^L\}$ and $\mathcal D^R=\{T_d: d\in \mathcal F_D^R\}$. Note that for $d\in \mathcal F_D^R$, $T_d(F)(x)=d\circ F(x),~x\in\R$.
 \begin{proposition}\label{PTtheorem4}
For a mapping $T: \M\to\M$,
\begin{enumerate}[(i)]
\item $T$ commutes
with each element of $\mathcal{U}^L$ if and only if $T\in \mathcal D^R$;
\item $T$ commutes with each element of  $\mathcal{D}^R$ if and only if $T\in\mathcal{U}^L$.
\end{enumerate}
  \end{proposition}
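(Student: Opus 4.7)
The statement has two parts, each consisting of an ``if'' and an ``only if'' direction. For the ``if'' direction in both (i) and (ii), I would verify commutation by direct computation: when $d \in \mathcal{F}_D^R$, the right-continuity of both $d$ and any cdf $F$ gives $T_d(F) = d \circ F$ (the right-limit in Definition \ref{many properties def}(i) is redundant), while for $u \in \mathcal{F}_U^L$ one has $T^u(F)(x) = F(u_R^{-1}(x))$. Then both $T_d \circ T^u(F)(x)$ and $T^u \circ T_d(F)(x)$ reduce to $d(F(u_R^{-1}(x)))$, establishing commutation.

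For the ``only if'' direction of (i), the plan is to bootstrap from Theorem \ref{PTtheorem1}. Since $\mathcal{F}_U \subseteq \mathcal{F}_U^L$, commutation with $\mathcal{U}^L$ in particular implies commutation with $\mathcal{U}$, so Theorem \ref{PTtheorem1} yields $T = T_d$ for some $d \in \mathcal{F}_D$; it remains to show $d$ must be right-continuous. Suppose for contradiction that $d(x_0+) > d(x_0)$ for some $x_0 \in [0,1)$. I would take $F$ to be uniform on $[0,1]$ and construct $u \in \mathcal{F}_U^L$ with an upward left-continuous jump at $x_0$, namely $u(y) = y$ for $y \leq x_0$ and $u(y) = y + L$ for $y > x_0$ with $L > 0$. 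A direct computation gives $u_R^{-1}(x) = x_0$ on $[x_0, x_0 + L]$, so $T^u(F)$ is flat at value $x_0$ on that interval, whence $T_d \circ T^u(F)(x) = d(x_0)$ on $[x_0, x_0 + L)$, while $T^u \circ T_d(F)(x) = T_d(F)(x_0) = d(x_0+)$ on the same interval. This contradicts commutation.

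For the ``only if'' direction of (ii), the plan is to revisit the proof of Theorem \ref{PTtheorem2} and carefully track which distortion functions are actually needed. The first block of that proof, which builds functions $u_n$ on $(-n,n]$ from $q_p^T(F_{U_n})$ and patches them into a single increasing left-continuous $u:\mathbb{R}\to\mathbb{R}$ satisfying $q_p^T(F) = u(F_L^{-1}(p))$, relies only on commutation of $T$ with $T_d$ for right-continuous $d$, by way of the identification $\mathcal{C}_n = \{T_d(F_{U_n}) : d \in \mathcal{F}_D^R\}$. That portion therefore goes through verbatim under the weaker hypothesis of commutation with $\mathcal{D}^R$ alone, yielding $T = T^u$ with $u \in \mathcal{F}_U^L$. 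The subsequent step of the original proof, which upgrades $u$ from left-continuous to fully continuous, crucially invokes the non-right-continuous indicator distortion $\mathbf{1}_{(x_0,1]}$; that step is unavailable here, but it is also unnecessary since the target class is only $\mathcal{F}_U^L$.

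The main obstacle is the ``only if'' direction of (i): isolating a left-continuous utility function whose action on a simple $F$ creates a flat plateau in $T^u(F)$ located exactly at a right-discontinuity of $d$. The crux is that such a plateau forces $T_d \circ T^u$ to read the ``lower'' value $d(x_0)$ on the plateau's right-open end, while $T^u \circ T_d$ instead reads the right-limit $d(x_0+)$ there, because $T^u$ transports the constant value $x_0$ to the plateau while $T_d$ first converts $F$ into its right-continuous distortion. Once this test configuration is in hand, the asymmetry between $d$ and its right-continuous version delivers the contradiction, and the proposition follows by combining these ingredients with the dual analysis for (ii).
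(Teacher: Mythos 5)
Your proposal is correct and follows essentially the same route as the paper: the ``if'' directions via the identity $T_d\circ T^u(F)(x)=d(F(u_R^{-1}(x)))=T^u\circ T_d(F)(x)$ for $d\in\mathcal F_D^R$, $u\in\mathcal F_U^L$; part (i) ``only if'' by bootstrapping from Theorem \ref{PTtheorem1} and refuting a right-discontinuity of $d$ with a left-continuous jump utility applied to the uniform distribution (the paper takes your $L=1$ and evaluates at $x_0$); and part (ii) ``only if'' by observing that the first block of the proof of Theorem \ref{PTtheorem2}, up to \eqref{Eq:utility}, uses only right-continuous distortions and already yields $T=T^u$ with $u\in\mathcal F_U^L$. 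No gaps to report.
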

\begin{proof}
First note that the ``if" parts of (i)-(ii) are implied by the commutation of $T_d$ and $T^u$, which has been shown in the proof of Theorem \ref{RDUM2}.

 We next focus on the ``only if" part of (i). Note that the commutation of $T$ and each element in $\mathcal{U}^L$ implies the commutation of $T$ and each element in $\mathcal{U}$. By Theorem \ref{PTtheorem1}, we have $T=T_d$ for some $d\in\mathcal F_D$. Next we show  $d\in \mathcal F_D^R$ by contradiction. Suppose that there exists $x_0\in [0,1)$ such that $d(x_0)<d(x_0+)$. Moreover,  let $F_U$ be the standard uniform distribution on $[0,1]$, and
 \begin{equation*}\label{eq:phi}u(x)=\left\{\begin{array}{cc}
x,&x\leq x_0\\
x+1,& x>x_0\\
\end{array}
\right..
\end{equation*}
Then we have $u\in\mathcal F_U^L$. Direct computation gives
  $$T_d\circ T^u(F_U)(x_0)=[d\circ(F_U\circ u_R^{-1})](x_0+)=\lim_{y\downarrow x_0}d(F_U(u_R^{-1}(y)))=d(x_0),$$
 and $$T^u\circ T_d (F_U)(x_0)=w(u_R^{-1}(x_0))=w(x_0)=d(x_0+), $$
 where $w(x)=(d \circ F_U)(x+)=\lim_{y\downarrow x}d(F_U(y))$. This implies $T_d\circ T^u\neq T^u\circ T_d$, leading to a contradiction. Hence, $d\in\mathcal F_D^R$, and we establish the ``only if" part of (i).

%Hence we have
%$$d(1)=\lim_{x\to\infty}T(\delta_0)(x)=1,~\text{and}~d(0)=\lim_{x\to-\infty}T(\delta_0)(x)=0,$$
%implying $d\big{|}_{[0,1]}\in \mathcal F_D^R$.

The ``only if" part of (ii) is indicated by \eqref{Eq:utility} in the proof of Theorem \ref{PTtheorem2}.
\end{proof}
%  \begin{theorem}
%For a mapping $T: \M\to\M$,
%  $T$ commutes with each element in  $\mathcal{D}^R$ if and only if $T\in\mathcal{U}^L$.
%  \end{theorem}

\section{General spaces of distributions}\label{Sec:AD}

Let $\M_0$ be the set of all distributions on $\mathbb R$. In this appendix, we consider the distributional transforms $T:\M_0\to\M_0$. Both probability distortions and utility transforms are naturally extended from $\M$ to $\M_0$.
We obtain the results of Theorems \ref{PTtheorem1}-\ref{PTtheorem2} and Proposition \ref{PTtheorem4} for distributional transforms defined on $\M_0$. %In this section, we only consider the distributional transforms $T:\M_0\to\M_0$.
Let $\widehat{\mathcal F_D}$ be the set of all increasing functions $d:[0,1]\to [0,1]$ with $d(0)=d(0+)=0$ and $d(1)=d(1-)=1$. We denote $\widehat{\mathcal D}=\{T_d: d\in \widehat{\mathcal F_D}\}$.  In the next proposition, we derive some characterizations of probability distortions and utility transforms defined on $\M_0$.
\begin{proposition}\label{PTtheorem5}
For a mapping  $T: \M_0\to\M_0$,
 \begin{enumerate}
  \item[(i)] $T$ commutes with each element of $\mathcal U$ if and only if $T\in\widehat{\mathcal D}$;
   \item[(ii)] $T$ commutes with each element of $\widehat{\mathcal D}$ if and only if $T\in \mathcal U$.
  \end{enumerate}
     \end{proposition}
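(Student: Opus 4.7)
Plan: The proposition extends Theorems \ref{PTtheorem1} and \ref{PTtheorem2} from $\M$ to $\M_0$, with $\mathcal F_D$ refined to $\widehat{\mathcal F_D}$. The extra conditions $d(0+)=0$ and $d(1-)=1$ are exactly what is needed so that $T_d(F)$ remains a bona fide cdf in $\M_0$ when the mass of $F$ is pushed out to $\pm\infty$. The ``if'' directions of (i) and (ii) reuse verbatim the pointwise commutation calculation from the proof of Theorem \ref{PTtheorem1}; I only need to check additionally that $(d\circ F)(x+)\to 0$ as $x\to-\infty$ and $\to 1$ as $x\to+\infty$, both of which follow from $d\in\widehat{\mathcal F_D}$.

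For the ``only if'' direction of (i), my plan is to reduce to Theorem \ref{PTtheorem1} applied on $\M$. Using the truncation $u_N(x)=\max(-N,\min(N,x))\in\mathcal F_U$, any $F\in\M$ with support in $[-N,N]$ satisfies $T^{u_N}(F)=F$, so by commutation $T(F)=T^{u_N}(T(F))$, which forces $T(F)$ to be supported in $[-N,N]$; hence $T$ maps $\M$ into $\M$, and Theorem \ref{PTtheorem1} gives $T|_\M=T_d$ for some $d\in\mathcal F_D$. To push the identity back to $\M_0$, for arbitrary $F\in\M_0$ set $F_n=T^{u_n}(F)\in\M$ and use $T\circ T^{u_n}=T^{u_n}\circ T$: evaluating at $x\in[-n,n)$ yields $T(F)(x)=T_d(F_n)(x)=T_d(F)(x)$, and letting $n\to\infty$ gives $T=T_d$ on $\M_0$. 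Applying this to any $F\in\M_0$ with full support (e.g.\ standard normal), $T_d(F)\in\M_0$ forces $d(0+)=0$ and $d(1-)=1$, i.e.\ $d\in\widehat{\mathcal F_D}$.

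For the ``only if'' direction of (ii), I would parallel the proof of Theorem \ref{PTtheorem2}, now with a reference distribution that has full support on $\R$. Fix $F_0\in\M_0$ with continuous and strictly increasing cdf of range $(0,1)$. A direct check shows that every $F\in\M_0$ can be written as $F=T_{d_F}(F_0)$, where the right-continuous distortion $d_F=F\circ F_0^{-1}$ lies in $\widehat{\mathcal F_D}$ precisely because $F$ has limits $0$ and $1$ at $\pm\infty$. Setting $H=T(F_0)$ and using commutation, $T(F)=T_{d_F}(H)$; a quantile computation using right-continuity of $d_F$ yields $q_p^T(F)=H_L^{-1}((d_F)_L^{-1}(p))=u(F_L^{-1}(p))$, where $u:=H_L^{-1}\circ F_0$ is increasing and left-continuous. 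Since this quantile identity holds for every $F\in\M_0$, it pins down $T(F)=T^u(F)$.

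The main obstacle is to upgrade $u$ from left-continuous to continuous, so that $u\in\mathcal F_U$. My plan is by contradiction: supposing $u(t_0+)>u(t_0)=:y_0$ for some $t_0\in\R$, I take $d:=\id_{(F_0(t_0),1]}$, which lies in $\widehat{\mathcal F_D}$ because $F_0(t_0)\in(0,1)$. A short calculation shows $T_d(F_0)=\delta_{t_0}$, so $T^u(T_d(F_0))(y_0)=\delta_{y_0}(y_0)=1$; on the other hand, since $u_R^{-1}(y)=t_0$ for every $y\in[y_0,u(t_0+))$, evaluating the opposite composition gives $T_d(T^u(F_0))(y_0)=0$, contradicting $T\circ T_d=T_d\circ T$. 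Therefore $u$ has no right jump, hence $u$ is continuous, $u\in\mathcal F_U$, and $T=T^u\in\mathcal U$.
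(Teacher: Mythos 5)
Your proof is correct and follows essentially the same route as the paper's: for (i) you restrict $T$ to $\M$ via truncation utilities, invoke Theorem \ref{PTtheorem1}, extend $T=T_d$ to $\M_0$ by a utility-transform trick, and read off $d(0+)=0$, $d(1-)=1$ from a full-support distribution; for (ii) you write every $F\in\M_0$ as a distortion of a full-support reference cdf $F_0$, derive $q_p^T(F)=u(F_L^{-1}(p))$ with $u$ increasing and left-continuous, and upgrade to continuity by the same indicator-distortion contradiction used for Theorem \ref{PTtheorem2}. The only differences (truncations in place of $\arctan$, a normal cdf in place of the paper's exponential-type examples, and carrying out the continuity argument with $F_0$ instead of the uniform distribution) are cosmetic.
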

  \begin{proof} The ``if" parts of both (i) and (ii) follow from the same arguments as in the proof of Theorem \ref{PTtheorem1}. We next focus on the ``only if" parts.

   (i) First note that $T\circ T^u=T^u\circ T$ for all bounded $u\in\mathcal F_U$ implies that $T(F)\in\M$ for all $F\in\M$. Hence, in light of Theorem \ref{PTtheorem1}, there exists $d\in\mathcal F_D$ such that for all $F\in \M$, $$T(F)(x)=(d \circ F)(x+),~x\in \mathbb{R}.$$
   We next show that this equality holds for all $F\in\M_0$.
  Let $u(x)=\arctan(x),~x\in\mathbb{R}$. Then for any $F\in \M_0$, we have $T^u(F)\in\M$. Hence for $F\in\M_0$
  $$ T\circ T^u(F)(x)=[d \circ(F\circ u_R^{-1})](x+)=\lim_{y\downarrow x}d(F(u_R^{-1}(y)))=w(u_R^{-1}(x)),$$
  and
  $ T^u\circ T(F)(x)=(T(F))(u_R^{-1}(x)),$
  where $w(x)=\lim_{y\downarrow x}d(F(y))$. Using the fact that $T\circ T^u=T^u\circ T$ and $u_R^{-1}((-\pi/2, \pi/2))=\mathbb{R},$  we have
  $T(F)(x)=w(x)=T_d(F)(x),~x\in\mathbb{R}.$
  Hence $T(F)=T_d(F)$ for all $F\in\M_0$.  Taking $F(x)= e^x \wedge 1,~x\in\mathbb{R}$, it follows that
  $$d(0+)=\lim_{x\to -\infty}d(e^x\wedge 1)=\lim_{x\to -\infty}T(F)(x)=0.$$
  Moreover, letting $F(x)= (1-e^{-x})\id_{\{x\geq 0\}},~x\in\mathbb{R}$, we analogously obtain $d(1-)=1$.
  Consequently, $d\in\widehat{\mathcal F_D}$. We establish claim (i).

    (ii) We denote $F_0(x)=\frac{1}{2}+\frac{1}{\pi}\arctan(x),~x\in\mathbb{R}$ and note that $F_0\in\M_0$.
  For $F\in\M_0$, letting  $d(x)=F((F_0)_R^{-1}(x)),~x\in (0,1)$ with $d(0)=0$ and $d(1)=1$, it follows that $d$ is right-continuous, $d\in \widehat{\mathcal F_D}$ and $T_d(F_0)=F$. Hence,
  $$\M_0=\{T_d(F_0): d~\text{is right-continuous and}~d\in \widehat{\mathcal F_D}\}.$$
  For $F=T_d(F_0)$ with right-continuous $d\in \widehat{\mathcal F_D}$, using $T\circ T_d=T_d\circ T$, we have for $p\in (0,1)$
  $$q_p^T(F)=q_p^T(T_d(F_0))=q_p^{T_d}(T(F_0))=q_{d_L^{-1}(p)}^T(F_0).$$
  Note that for $p\in (0,1)$, $d_L^{-1}(p)=F_0(F_L^{-1}(p))$ and $d_L^{-1}(p)\in (0,1)$. Letting $u(x)=q_{F_0(x)}^T(F_0)$, we have $$q_p^T(F)=u(F_L^{-1}(p)),~p\in (0,1),~\text{and}~u\in \mathcal F_U^L,$$ implying $T=T^{u}$. We can show the continuity of $u$ using the same argument as in the proof of Theroem \ref{PTtheorem2}.    %This completes the proof.
  \end{proof}

We next obtain a version of Proposition \ref{PTtheorem4} for distributional transforms defined on $\M_0$.

  \begin{proposition}\label{PTtheorem6}
For a mapping $T: \M_0\to\M_0$,
\begin{enumerate}[(i)]
\item $T$ commutes
with each element of $\mathcal{U}^L$ if and only if $T\in \mathcal D^R\cap \widehat{\mathcal D}$;
\item $T$ commutes with each element of  $\mathcal{D}^R\cap \widehat{\mathcal D}$ if and only if $T\in\mathcal{U}^L$.
\end{enumerate}
  \end{proposition}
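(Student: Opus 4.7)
The plan is to reduce both parts to Proposition \ref{PTtheorem5} combined with the counterexample template of Proposition \ref{PTtheorem4}(i). The ``if'' directions in (i) and (ii) both follow from the commutation identity $T_d\circ T^u = T^u\circ T_d$ for $d\in\mathcal F_D^R$ and $u\in \mathcal F_U^L$ already verified in the proof of Theorem \ref{RDUM2}; the extra membership $d\in\widehat{\mathcal F_D}$ is precisely what makes $T_d$ map $\M_0$ into $\M_0$.

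For the ``only if'' direction of (i): since $\mathcal U\subseteq \mathcal U^L$, commutation with every element of $\mathcal U^L$ in particular implies commutation with every element of $\mathcal U$, so Proposition \ref{PTtheorem5}(i) yields $T=T_d$ for some $d\in\widehat{\mathcal F_D}$. To upgrade $d$ to be right-continuous, I would argue by contradiction: assuming $d(x_0)<d(x_0+)$ for some $x_0\in[0,1)$, the jump-utility counterexample from the proof of Proposition \ref{PTtheorem4}(i) transplants verbatim, since the step utility used there lies in $\mathcal F_U^L$ and the uniform distribution on $[0,1]$ lies in $\M_0$. That construction produces $T_d\circ T^u(F_U)(x_0)=d(x_0)\ne d(x_0+)=T^u\circ T_d(F_U)(x_0)$, contradicting commutation.

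For the ``only if'' direction of (ii): I would mimic the template of the proof of Proposition \ref{PTtheorem5}(ii) but with the smaller test class. Fix $F_0(x)=\tfrac12+\tfrac1\pi\arctan(x)$ and, for any $F\in\M_0$, define $d(x)=F((F_0)_R^{-1}(x))$ on $(0,1)$, with $d(0)=0$, $d(1)=1$. Since $(F_0)_R^{-1}$ is strictly increasing and continuous and $F$ is right-continuous with the correct limits at $\pm\infty$, one checks $d\in\mathcal F_D^R\cap\widehat{\mathcal F_D}$ and $T_d(F_0)=F$, so $\M_0$ is exhausted by $\{T_d(F_0):d\in\mathcal F_D^R\cap\widehat{\mathcal F_D}\}$. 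Commutation with all such $T_d$, together with the standard identity $q_p(T_d(G))=q_{d_L^{-1}(p)}(G)$, then gives
\begin{equation*}
q_p^T(F)=q_p^T(T_d(F_0))=q_{d_L^{-1}(p)}^T(F_0)=u(F_L^{-1}(p)),
\end{equation*}
where $u(x):=q_{F_0(x)}^T(F_0)$. This $u$ is real-valued, increasing, and left-continuous as the composition of the left-continuous quantile $p\mapsto q_p^T(F_0)$ with the strictly increasing continuous $F_0$, so $T=T^u$ with $u\in\mathcal F_U^L$.

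The key structural point, and the main obstacle to any stronger conclusion, is that $u$ in (ii) cannot be upgraded from left-continuous to continuous: the continuity argument of Theorem \ref{PTtheorem2} (echoed in Proposition \ref{PTtheorem5}(ii)) relies on the \emph{non}-right-continuous indicator distortion $\id_{(x_0,1]}$, which is excluded from the test class $\mathcal D^R\cap\widehat{\mathcal D}$. Dually, in (i), enlarging the test utilities from $\mathcal U$ to $\mathcal U^L$ is what forces right-continuity of $d$. These two trade-offs balance symmetrically and explain why the pairing in the statement is $(\mathcal U^L,\mathcal D^R\cap\widehat{\mathcal D})$ rather than a finer one.
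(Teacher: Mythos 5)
Your proposal is correct and matches the paper's own proof essentially step for step: the ``if'' parts via the commutation identity for $\mathcal F_D^R$ and $\mathcal F_U^L$, the ``only if'' of (i) by invoking Proposition \ref{PTtheorem5}(i) (since $\mathcal U\subseteq\mathcal U^L$) plus the jump-utility counterexample from Proposition \ref{PTtheorem4}(i) to force right-continuity of $d$, and the ``only if'' of (ii) by rerunning the argument of Proposition \ref{PTtheorem5}(ii) with the right-continuous test distortions, stopping at left-continuity of $u$. Your closing remark about why the continuity upgrade is unavailable (the indicator distortion $\id_{(x_0,1]}$ is excluded from $\mathcal D^R\cap\widehat{\mathcal D}$) correctly explains the asymmetry the paper leaves implicit.
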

  \begin{proof} The ``if" parts of both (i) and (ii) follow from the same arguments as in the proof of Proposition \ref{PTtheorem4}. We next focus on the ``only if" parts.

  (i) By Proposition \ref{PTtheorem5}, there exists $d\in\widehat{\mathcal F_D}$ such that for all $F\in \M_0$, $$T(F)(x)=(d \circ F)(x+),~x\in \mathbb{R}.$$
Using the same argument as in the proof of (i) of Proposition \ref{PTtheorem4}, we can show that $d$ is right-continuous. Hence $T=T_d\in \mathcal D^R\cap \widehat{\mathcal D}$.
 The ``only if" part of (ii) follows from the same argument as in the proof of (ii) of Proposition \ref{PTtheorem5}. Hence the detail is omitted.
  \end{proof}

 % \begin{remark}
 % Note that we only use one distribution $F_0$ together with $\mathcal F_D^R\cap \widehat{\mathcal F_D}$ in the proof of (i) of Theorem \ref{PTtheorem4} to represent $\M_0$. That is $\M_0=\{T_d(F_0): d\in\mathcal F_D^R \cap \widehat{\mathcal F_D}\}$. However, we could not find such a distribution in $\M$ to generate $\M$. Hence we use a sequence of distributions $F_{U_n}\in\M,~n\geq 1$, to generate $\M$ in the proof of (i) of Theorem \ref{PTtheorem2}.
 % \end{remark}

%With monotonicity, the set $\mathcal F_U$ in (i) of Theorem \ref{PTtheorem3} can be shrunk to $\mathcal F_U$.
%\begin{proposition}\label{Moncom} For  a distributional transform $T:\M_0\to\M_0$,
%$T$ is monotone and commutes with $T^u$ for all $u\in \mathcal F_U$ if and only if $T=T_d$ for some $d\in \mathcal F_D$.
%\end{proposition}
%\begin{proof}
%We first focus on the ``if part". Monotonicity follows from the check of definition and commutation with utility transforms follows from Theorem \ref{PTtheorem3} (i).   The ``only if" part is similar to the proof of Theorem \ref{PTtheorem1}. The only difference is that  we have to apply Theorem 1 of \cite{C09} to prove Equation \eqref{eq:quantile}.
%\end{proof}
We define the RDU transform on $\M_0$ as follows: a distributional transform $T:\M_0\to\M_0$ is an RDU transform if there exist $d\in\widehat{\mathcal F_D}$ and $u\in\mathcal F_U^L$ such that $T=T_d\circ T^u$.

With this definition of RDU,  the results in Theorems \ref{RDUM}-\ref{RDUM2} remain true for the mappings $T:\M_0\to\M_0$.
\begin{proposition} Let $T:\M_0\to\M_0$ be a monotone distributional transform. The following hold:
\begin{enumerate}[(i)]
\item $T$ commutes with $\mathcal U$ if and only if $T$ is an RDU transform with strictly increasing and surjective UF;
\item $T $ commutes with $\mathcal{D}^R\cap \widehat{\mathcal D}$ if and only if
    $T$ is an RDU transform with left-continuous UF and strictly increasing and continuous DF.
\end{enumerate}
 \end{proposition}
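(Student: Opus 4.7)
The plan is to mirror the proofs of Theorems~\ref{RDUM} and \ref{RDUM2}, replacing their appeals to Theorems~\ref{PTtheorem1}--\ref{PTtheorem2} by Propositions~\ref{PTtheorem5}--\ref{PTtheorem6}. The ``if'' directions are routine: given $T=T_d\circ T^u$ of the claimed form, the construction $\psi=u\circ v\circ u^{-1}$, $\omega=u^{-1}\circ v\circ u$ in part (i) (and the parallel distortion substitutions $d_2=d\circ d_1\circ d^{-1}$, $d_2=d^{-1}\circ d_1\circ d$ in part (ii)) reproduces the computations already performed in the ``if'' parts of Theorems~\ref{RDUM} and \ref{RDUM2}, using the commutation of $T^u$ with $T_d$ noted in the proof of Proposition~\ref{PTtheorem4}.

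For the ``only if'' direction of (i), I would first prove that $T$ sends every point mass to a point mass, $T(\delta_c)=\delta_{h(c)}$. The trick is to plug the constant $u_c\equiv c\in\mathcal F_U$ into set-commutation: since $T^{u_c}(F)=\delta_c$ for every $F$, there must exist $\psi\in\mathcal F_U$ with $T^\psi(T(F))=T(\delta_c)$ identically in $F$, so $T^\psi$ is constant on $T(\M_0)$. Writing $T^\psi(G)(y)=G(\psi^{-1}_R(y))$, in the generic case where $T(\M_0)$ is rich enough to distinguish cdf values at finite points, one is forced to have $\psi^{-1}_R(y)\in\{-\infty,+\infty\}$ for every $y$, i.e.\ $\psi$ globally constant, giving $T(\delta_c)$ a point mass. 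In the complementary degenerate case---where $T(\M_0)$ already consists only of point masses---the conclusion holds trivially. Monotonicity of $T$ makes $h$ increasing.

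With the factorization $T(\delta_c)=\delta_{h(c)}$ in hand, applying set-commutation to point masses produces, for every $u\in\mathcal F_U$, utility functions $\psi,\omega\in\mathcal F_U$ with $\psi\circ h=h\circ u$ and $h\circ\omega=u\circ h$. The chain of contradiction arguments in the ``only if'' proof of Proposition~\ref{mappings} carries over verbatim and upgrades $h$ to strictly increasing, continuous and surjective. Define $\widehat T:=T^{h^{-1}}\circ T$. Given $u\in\mathcal F_U$ and the corresponding $\psi=h\circ u\circ h^{-1}\in\mathcal F_U$ from set-commutation, a short calculation $T^u\circ\widehat T=T^{u\circ h^{-1}}\circ T=T^{h^{-1}\circ\psi}\circ T=T^{h^{-1}}\circ T^\psi\circ T=T^{h^{-1}}\circ T\circ T^u=\widehat T\circ T^u$ shows $\widehat T$ commutes with every $T^u$. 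Proposition~\ref{PTtheorem5}(i) identifies $\widehat T=T_d$ for some $d\in\widehat{\mathcal F_D}$, and commutation of $T_d$ and $T^h$ yields $T=T^h\circ T_d=T_d\circ T^h$ as required.

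Part (ii) runs in close parallel to Theorem~\ref{RDUM2}. Fixing $y_0<z_0$ and $x_0\in\R$, one examines $g(\alpha):=T(\alpha\delta_{y_0}+(1-\alpha)\delta_{z_0})(x_0)$; set-commutation with $\mathcal D^R\cap\widehat{\mathcal D}$ combined with monotonicity forces $g\in\mathcal F_D^R\cap\widehat{\mathcal F_D}$ to be strictly increasing and continuous. Then $\widehat T:=T_{g^{-1}}\circ T$ commutes with each element of $\mathcal D^R\cap\widehat{\mathcal D}$, so Proposition~\ref{PTtheorem6}(ii) gives $\widehat T=T^u$ for some $u\in\mathcal F_U^L$, yielding $T=T_g\circ T^u$; the boundary sub-case $g(\alpha)\in\{0,1\}$ for all $\alpha$ is resolved exactly as at the end of the proof of Theorem~\ref{RDUM2}. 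The main obstacle throughout is the initial reduction in (i): in the $\M_0$ setting we cannot directly use the quantile functional $q_p^T$ and Proposition~\ref{mappings} as in the proof of Theorem~\ref{RDUM} because such quantiles can take infinite values, and the constant-$u_c$ substitute requires the careful case split sketched above to pin down $T(\delta_c)$.
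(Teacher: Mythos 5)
Your overall architecture (reduce to point masses, extract $h$, pass to $\widehat T=T^{h^{-1}}\circ T$, invoke Propositions \ref{PTtheorem5}--\ref{PTtheorem6}) is the right shape, but two steps do not hold as written. First, in part (i), the crucial reduction $T(\delta_c)=\delta_{h(c)}$ is not proved. Your dichotomy ``$T(\M_0)$ is rich enough to distinguish cdf values at finite points'' versus ``$T(\M_0)$ consists only of point masses'' is neither precise nor exhaustive, and the claim that $T^\psi$ being constant on $T(\M_0)$ forces $\psi$ to be globally constant is unjustified (e.g.\ $\psi$ could be constant merely on a set carrying all distributions in the range of $T$). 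The step can be repaired, but it needs the \emph{other} inclusion of set commutation as well: for the constant $u_c\equiv c$ there exists $\omega$ with $T\circ T^{\omega}=T^{u_c}\circ T$, which gives $T(\delta_{\omega(x)})=\delta_c$, so some point mass is sent to a point mass; plugging such a $\delta_y$ into $T^{\psi}\circ T=T\circ T^{u_{c'}}$ then shows $T(\delta_{c'})=T^{\psi}(T(\delta_y))$ is a point mass for every $c'$. Note also that your stated reason for abandoning the paper's route is mistaken: for $p\in(0,1)$ the left quantile of any distribution on $\R$ is finite, so $q_p^T:\M_0\to\R$ is well defined, and the paper's proof of (i) simply observes that Proposition \ref{mappings} holds verbatim for $\rho:\M_0\to\R$ and then repeats the argument of Theorem \ref{RDUM}.

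Second, in part (ii) the degenerate case is \emph{not} ``resolved exactly as at the end of the proof of Theorem \ref{RDUM2}.'' There, the conclusion $T(F)=\delta_c$ for all $F$ uses that every compactly supported $F$ satisfies $\delta_{-n}\leq_{\rm st}F\leq_{\rm st}\delta_n$, so monotonicity sandwiches $T(F)$ between $T(\delta_{-n})=T(\delta_n)=\delta_c$; this sandwich is unavailable for $F\in\M_0$ with unbounded support. The paper closes exactly this hole with an extra argument: pick $d_1\in\mathcal F_D^R\cap\widehat{\mathcal F_D}$ with $d_1(\epsilon)=0$ and $d_1(1-\epsilon)=1$, so that $T_{d_1}$ maps $\M_0$ into $\M$, and then exploit both identities $T_{d_2}\circ T=T\circ T_{d_1}$ and $T\circ T_{d_3}=T_{d_1}\circ T$, varying $d_1$ freely, to force $T(F)(c)=1$ and $T(F)(x)=0$ for $x<c$ for every $F\in\M_0$. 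This extension from $\M$ to $\M_0$ is the main new content of the proof of (ii), and it is missing from your proposal; without it the degenerate case only yields $T\equiv\delta_c$ on $\M$, not on $\M_0$.
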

 \begin{proof} (i) First note that Proposition \ref{mappings} also holds for $\rho:\M_0\to\R$. All the rest of the proof follows the same argument as in the proof of Theorem \ref{RDUM}. We omit the details.

 (ii) The ``if" part follows from  checking the definition, which is exactly the same as the proof of Theorem \ref{RDUM2}. We next show the ``only if" part, which only requires some refinement of the proof of Theorem \ref{RDUM2}. Along with the proof of Theorem \ref{RDUM2}, we consider two different scenarios. Recall that $B_{y,z}^{\alpha}=\alpha\delta_y+(1-\alpha)\delta_z$ with $y<z$ and $\alpha\in [0,1]$. The proof of the case that $T(B_{y_0,z_0}^{\alpha_0})(x_0)\in (0,1)$ for some $y_0<z_0$, $x_0\in\R$ and $\alpha_0\in (0,1)$ follows the similar argument as that of Theorem \ref{RDUM2}. We next consider the case that $T(B_{y,z}^{\alpha})(x)\in\{0,1\}$ for all $y<z$, $x\in\R$ and $\alpha\in (0,1)$. Following the same argument as in the proof of Theorem \ref{RDUM2}, we have that for some $c\in\R$, $T(F)=\delta_c$ for all $F\in\mathcal M$. We aim to extend this conclusion to $\M_0$. Note that commutation with $\mathcal{D}^R\cap \widehat{\mathcal D}$ implies that for any $d_1\in\mathcal F_D^{R}\cap \widehat{\mathcal{F}_D}$, there exist $d_2, d_3\in\mathcal F_D^R\cap \widehat{\mathcal{F}_D}$ such that
$T_{d_2}\circ T=T\circ T_{d_1}$ and $T\circ T_{d_3}=T_{d_1}\circ T$. Let $d_1$ be a function in $\mathcal F_D^{R}\cap \widehat{\mathcal{F}_D}$ additionally satisfying $d_1(\epsilon)=0$ and $d_1(1-\epsilon)=1$ for some $\epsilon\in (0,1/2)$. This implies $T_{d_1}(F)\in\M$ for all $F\in\M_0$. It follows from the conclusion on $\M$ that
$T_{d_2}\circ T(F)=T\circ T_{d_1}(F)=\delta_c.$ Hence, we have  for all $F\in\M_0$, $d_2(T(F)(c))=1$ and $d_2(T(F)(x))=0$ for all $x<c$. Moreover, by $T\circ T_{d_3}=T_{d_1}\circ T$, we have  $T(d_3\circ F)(c)=d_1(T(F)(c))$. By freely choosing $d_1$, we have  that if $T(F)(c)<1$, then there exists $d_3$ such that $T(d_3\circ F)(c)=d_1(T(F)(c))=0$. This implies $d_2(T(d_3\circ F)(c))=0$, which contradicts $d_2(T(F)(c))=1$ for all $F\in\M_0$. Hence $T(F)(c)=1$ for all $F\in\M_0$. If $T(F)(x)>0$ for some $x<c$, then  by freely choosing $d_1$ we have that there exists some $d_3$ such that $T(d_3\circ F)(x)=d_1(T(F)(x))=1$. Then we have $d_2(T(d_3\circ F)(x))=1$, leading to a contradiction. Hence, $T(F)(x)=0$ for all $x<c$ and $F\in\M_0$. Combing the above conclusions, we obtain $T(F)=\delta_c$ for all $F\in\M_0$, which implies $T=T_d\circ T^c$, where $d\in \mathcal F_D^{R}\cap \widehat{\mathcal{F}_D}$ is strictly increasing and continuous. 
 \end{proof}
 %(ii) The ``if" part follows from the proof of Theorem \ref{RDUM2}. For the ``only if" part, the case of $T(B_{y_0,z_0}^{\alpha_0})(x_0)\in (0,1)$ for some $y_0<z_0$, $x_0\in\R$ and $\alpha_0\in (0,1)$, follows exactly the same argument as in the proof of Theorem \ref{RDUM2}. For the scenario of $T(B_{y,z}^{\alpha})(x)\in \{0,1\}$ for all $y<z$, $x\in\R$ and $\alpha\in (0,1)$, we could similarly show that for all $x\in\R$, $T(\delta_x)=\delta_c$ for some $c\in\R$. Using the monotonicity of $T$, we have $T(F)=\delta_c$ for all $F\in\M$. Next, we extend this conclusion to $\M_0$.  We denote $F_0(x)=\frac{1}{2}+\frac{1}{\pi}\arctan(x),~x\in\mathbb{R}$ and note that $F_0\in\M_0$.
 % For $F\in\M_0$, letting  $d(x)=F((F_0)_R^{-1}(x)),~x\in (0,1)$ with $d(0)=0$ and $d(1)=1$, it follows that $d$ is right-continuous, $d\in \widehat{\mathcal F_D}$ and $T_d(F_0)=F$.

 %Note that in Proposition \ref{Moncom} if $\M=\M_0$, $T:\M_0\to\M_0$ forces $d\in \mathcal F_D^{R}$ as $d\in\mathcal F_D\setminus \mathcal F_D^{**}$ may lead to $T(F)\notin\M_0$ for some $F\in\M_0$.

\end{document}